\documentclass[onefignum,onetabnum]{siamart171218}

\usepackage{amssymb}
\usepackage{amsmath}
\usepackage{amsfonts}
\usepackage{enumerate} 
\usepackage{caption,subcaption}
\usepackage{hyperref} 

\usepackage{lipsum}
\usepackage{amsfonts}
\usepackage{graphicx}
\usepackage{epstopdf}
\usepackage{algorithmic}
\ifpdf
  \DeclareGraphicsExtensions{.eps,.pdf,.png,.jpg}
\else
  \DeclareGraphicsExtensions{.eps}
\fi

\newsiamremark{remark}{Remark}
\newsiamremark{hypothesis}{Hypothesis}
\crefname{hypothesis}{Hypothesis}{Hypotheses}
\newsiamthm{claim}{Claim}

\headers{GS of quantum droplets}{Wei Liu, Limin Xu}

\title{Ground-state solution of quantum droplets in Bose-Bose mixtures\thanks{Submitted to the editors DATE.
\funding{This work was funded by the NSFC grant 12571448 and the Innovation Research Foundation of NUDT (202402-YJRC-XX-002).}}}

\author{Wei Liu\thanks{College of Science, National University of Defense Technology, Changsha 410073, China (\email{wl@nudt.edu.cn}).}
\and Limin Xu\thanks{Corresponding author. Institute for Theoretical Sciences, Westlake University, Hangzhou, 310030, China  (\email{xulimin@westlake.edu.cn})}}

\usepackage{amsopn}

\makeatletter
\newcommand*{\addFileDependency}[1]{
  \typeout{(#1)}
  \@addtofilelist{#1}
  \IfFileExists{#1}{}{\typeout{No file #1.}}
}
\makeatother

\ifpdf
\hypersetup{
  pdftitle={Ground-state solution of quantum droplets in Bose-Bose mixtures},
  pdfauthor={W. Liu, L. Xu}
}
\fi

\begin{document}

\maketitle

\begin{abstract} 
In this paper, we present a systematic study on the ground state computation of quantum droplets in homonuclear Bose-Bose mixtures, governed by the extended Gross-Pitaevskii equations (eGPEs) with Lee-Huang-Yang (LHY) corrections. This model captures the formation of self-bound droplets stabilized by the delicate balance between the attractive mean-field interaction and the repulsive quantum fluctuations. We formulate dimensionless energy functionals for both the general two-component system and the reduced single-component density-locked model. To compute the ground states efficiently, we adapt and benchmark various gradient flow discretization schemes, identifying a backward-forward sine-pseudospectral scheme based on the gradient flow with Lagrange multiplier method (GFLM-BFSP) as the robust solver for our simulations. Utilizing this method, we report three main numerical observations: (i) the density-locked model is quantitatively validated as a reliable approximation for ground state properties; (ii) the dimension-dependent convergence rates of the Thomas-Fermi approximation are established in the strong-coupling regime; and (iii) the critical particle number for self-binding in free space is numerically determined, providing a precise correction to the analytical prediction by Petrov [Phys. Rev. Lett. 115, 155302 (2015)]. 
\end{abstract}

\begin{keywords} 
quantum droplets; nonlinear Schr\"{o}dinger equation; extend Gross-Pitaevskii equation; Ground-state solution;
\end{keywords}


\begin{AMS}
35Q55, 35P30, 65Z05, 65M70, 81-08
\end{AMS}

\section{Introduction}
In 2015, Petrov \cite{petrov_quantum_2015,petrov_ultradilute_2016} predicted the existence of self-bound quantum droplets in binary Bose-Einstein condensates (BECs). According to mean-field theory, a mixture is expected to collapse when the inter-species attraction dominates the average intra-species repulsion. However, Petrov demonstrated that the system can be stabilized by an effective repulsion arising from the first beyond-mean-field correction, known as the Lee-Huang-Yang (LHY) correlation energy, thus forming stable droplets even in free space. While initially proposed for binary mixtures in free space, this concept has since been generalized to confined geometries and dipolar condensates. To date, quantum droplets have been experimentally observed in various systems: (1) dipolar gases of Dy \cite{ferrier-barbut_observation_2016,ferrier2016liquid,kadau2016observing,schmitt2016self,wenzel2017striped} and Er \cite{chomaz2016quantum}; (2) homonuclear two-component mixtures of $^{39}$K in free space \cite{semeghini_self-bound_2018} as well as under one- \cite{cabrera2018quantum} or two-dimensional confinement \cite{cheiney_bright_2018}; and (3) heteronuclear mixtures such as $^{41}$K--$^{87}$Rb \cite{derrico_observation_2019} and $^{23}$Na--$^{87}$Rb \cite{guo2021lee}.

Compared to traditional BECs, quantum droplets in Bose-Bose mixtures exhibit several exotic properties. The first is the mechanism of self-evaporation \cite{ferioli_dynamical_2020,petrov_quantum_2015}. Unlike conventional condensates, an excited droplet tends to release particles into the background gas, while the remaining atoms relax asymptotically to the ground state of the droplet core (provided the particle number remains within a stable range). Notably, this self-evaporation is absent in dipolar quantum droplets \cite{baillie2017collective} and in the breathing mode of one-dimensional two-component droplets \cite{tylutki2020collective}. The second key property is density locking (or density balancing) \cite{ferioli_dynamical_2020,flynn_quantum_2023,petrov_quantum_2015}. In the ground state, the density ratio between the two components is fixed to a specific value determined by the interaction strengths; any excess atoms of either component are expelled from the droplet core to the background cloud. These unique characteristics render quantum droplets ideal platforms for benchmarking quantum many-body theories against experimental observations \cite{ancilotto_self-bound_2018}. Consequently, the rapid experimental progress and these intriguing properties have stimulated a wave of theoretical and numerical investigations.

The macroscopic properties of quantum droplets in Bose-Bose mixtures are typically described by the extended Gross-Pitaevskii equation (eGPE), which is a nonlinear Schr\"{o}dinger-type equation with competing nonlinearities, derived by incorporating the Lee-Huang-Yang (LHY) correlation term into the standard Gross-Pitaevskii energy functional and subsequently applying the variational principle \cite{petrov_quantum_2015}.
Physically, the LHY correction arises from quantum fluctuations and provides a repulsive energy contribution proportional to $n^{5/2}$ (where $n$ is the density). This repulsion is crucial for stabilizing the system against the collapse induced by the attractive inter-species mean-field interaction, which scales as $n^2$.
The computation of the ground state for the eGPE is of fundamental importance, as it characterizes the equilibrium density profiles and stability regions of self-bound droplets---localized states that maintain a finite size in free space without any external confinement.
Given the structural similarity between the eGPE and the standard GPE, the efficient numerical algorithms established for traditional BECs provide a natural starting point for our investigation.

Extensive numerical studies have been devoted to computing the ground states of BECs based on the Gross-Pitaevskii theory; see, e.g., \cite{ALT2017JCP,BC2013KRM,bao2006efficient,bao2004computing,Cances10,CDLX2023JCP,CST2000PRE,DP2017SISC,faou2018convergence,HJ2025SIREV}. Gradient flow-based methods, also known as imaginary-time propagation methods, are among the most widely used approaches for computing the ground states. Seminal works include the continuous normalized gradient flow (CNGF) introduced by Bao and Du \cite{bao2004computing}, and its efficient discretization using sine-pseudospectral methods by Bao, Chern, and Lim \cite{bao2006efficient}. Recently, Liu and Cai \cite{liu2021normalized} proposed the gradient flow with Lagrange multiplier (GFLM) method to achieve more flexible numerical discretization by eliminating inherent temporal splitting errors in traditional normalized gradient flow methods. 
In contrast, numerical simulations of the eGPE for quantum droplets remain comparatively limited.
Existing studies have primarily employed imaginary-time propagation \cite{ferioli_dynamical_2020} or Runge-Kutta methods \cite{flynn_quantum_2023}, often relying on lower-order spatial discretizations.
To date, a systematic evaluation of high-order spectral gradient flow algorithms has not been thoroughly conducted for quantum droplets, particularly in the challenging regime where mean-field attraction competes with LHY repulsion.

In this paper, we aim to bridge this gap by establishing a systematic numerical framework for quantum droplets and exploring their ground state properties. Our contributions are threefold. First, we provide a rigorous formulation of the dimensionless eGPEs and derive the reduced single-component density-locked model, clarifying the connections between different effective models. Second, we adapt and benchmark various gradient flow algorithms for this system. Through extensive comparisons, we identify the GFLM-BFSP scheme as the optimal solver, demonstrating that explicitly handling the normalization constraint is crucial for correcting time-splitting errors in the presence of LHY interactions. Third, utilizing this efficient tool, we investigate physical properties that are difficult to access analytically. We validate the accuracy of the density-locked approximation, quantify the dimension-dependent convergence rates of the Thomas-Fermi approximation (TFA), and numerically determine the precise critical particle number $N_c$ for self-binding in free space, providing a correction to previous analytical estimates based on Gaussian ansatz.

The paper is organized as follows. In Section~\ref{sec:eGPE}, we derive the dimensionless eGPE and discuss its reduction to lower dimensions and the density-locked model. In Section~\ref{sec:Numerical methods}, we detail the CNGF and GFLM methods along with their spectral discretizations, and provide approximate initial data for different regimes. Section~\ref{sec:Numerical results} presents the numerical results, including the performance comparison of algorithms, the validation of the density-locked model, and the investigation of physical properties such as $N_c$ and TFA convergence. Finally, a summary is given in Section~\ref{sec:Summary}.

\section{Extended Gross-Pitaeskii equation}\label{sec:eGPE}
\subsection{Energy functional and extended Gross-Pitaevskii equation}
The energy functional of quantum droplets in Bose-Bose mixtures is given by \cite{ancilotto_self-bound_2018,derrico_observation_2019,ferioli_dynamical_2020,flynn_quantum_2023,petrov_quantum_2015}
\begin{align}\label{eq: energy functional}
E(\psi_1,\psi_2)&=  \sum_{j=1}^2 \int_{\mathbb{R}^3} \left[\frac{\hbar^2}{2 m_j}\left|\nabla \psi_j\right|^2+V_j(\mathbf{x}) n_j\right] \mathrm{d} \mathbf{x} 
+\frac{1}{2} \sum_{j,l=1}^2 g_{jl} \int_{\mathbb{R}^3} n_j n_l \,\mathrm{d} \mathbf{x}\\
&\quad + \int_{\mathbb{R}^3} \mathcal{E}_{\mathrm{LHY}}\left(n_1, n_2\right) \mathrm{d} \mathbf{x}, \nonumber
\end{align}
where $\hbar$ is the reduced Planck constant, $\psi_j(\mathbf{x},t)$ is the wave function of the $j$-th component with density $n_j(\mathbf{x},t)=\left|\psi_j(\mathbf{x},t)\right|^2$, and $m_j$ is the mass of particles in the $j$-th component ($j=1,2$). The external trapping potentials are denoted by $V_{j}(\mathbf{x})=\frac{1}{2}m_j\left(\omega_{x,j}^2x^2+\omega_{y,j}^2y^2+\omega_{z,j}^2z^2\right)$. The intra- and inter-species coupling constants are defined as $g_{jl}=\frac{4\pi\hbar^2a_{jl}}{m_{jl}}$,  where $a_{jj}>0$ and $a_{12}<0$ are the repulsive intra-species and attractive inter-species scattering lengths, respectively. $m_{jl}$ are defined as $m_{jj}=m_j$ and $m_{12}={2m_1m_2}/{(m_1+m_2)}$. 

The LHY correlation energy density, which accounts for quantum fluctuations, is expressed as
\begin{equation}
\mathcal{E}_{\mathrm{LHY}}\left(n_1, n_2\right) =\frac{8}{15 \pi^2}\left(\frac{m_1}{\hbar^2}\right)^{3 / 2}\left(g_{11} n_1\right)^{5 / 2} f\left(\frac{m_2}{m_1}, \frac{g_{12}^2}{g_{11} g_{22}}, \frac{g_{22} n_2}{g_{11} n_1}\right).
\end{equation}
The dimensionless function $f(z,u,x)>0$ is defined by the integral \cite{minardi_effective_2019}
\begin{equation*}
	f(z, u, x)=\frac{15}{32} \int_0^{\infty} k^2 \mathcal{F}(k, z, u, x)\, \mathrm{d} k,
\end{equation*}
where the integrand $\mathcal{F}$ takes the compact form
\begin{equation*}
\mathcal{F}(k, z, u, x) = \sqrt{\lambda_+(k)} + \sqrt{\lambda_-(k)} -\frac{1+z}{2 z} k^2-(1+x)+\frac{1}{k^2}\left[1+x^2 z+4 u x \frac{z}{1+z}\right],
\end{equation*}
with the dimensionless squared eigenenergies $\lambda_\pm(k)$, obtained by diagonalizing the Bogoliubov Hamiltonian for the binary mixture, defined as
\begin{equation*}
\lambda_\pm(k) =  \frac{1}{2}\left[k^2\left(1\!+\!\frac{x}{z}\right)\!+\!\frac{k^4}{4}\left(1\!+\!\frac{1}{z^2}\right)\right] 
 \pm \sqrt{\frac{1}{4}\left[\left(k^2\!+\!\frac{k^4}{4}\right)\!-\!\left(\frac{x}{z} k^2\!+\!\frac{k^4}{4 z^2}\right)\right]^2\!+k^4u \frac{x}{z}}.
\end{equation*}

In the regime relevant to quantum droplets, where the inter-species attraction nearly compensates the intra-species repulsion, the function $f(z, u, x)$ can be well approximated by $f\left(z,u,x\right) \simeq f\left(z,1,x\right)\simeq (1+z^{3/5} x)^{5/2}$ \cite{minardi_effective_2019}. Consequently,
\begin{equation}\label{eq:eLHYdensity}
    \mathcal{E}_{\mathrm{LHY}}\left(n_1, n_2\right)\simeq \frac{256\sqrt{\pi}\hbar^2}{15}\left(\frac{a_{11}}{m_1^{2/5}}|\psi_1|^2+\frac{a_{22}}{m_2^{2/5}}|\psi_2|^2\right)^{5/2}.
\end{equation}
With the explicit expression \eqref{eq:eLHYdensity}, the coupled eGPE can be then derived by utilizing the variational derivations of the total energy $E(\psi_1, \psi_2)$ with respect to $\psi_1^*$ and $\psi_2^*$ as
\begin{align}\label{eq: eGPE_system}
    \mathrm{i} \hbar \partial_t \psi_j 
    &=\!\left[-\frac{\hbar^2}{2 m_j} \nabla^2+V_j(\mathbf{x})+\sum_{l=1}^2 g_{jl}|\psi_l|^2 
    +\frac{128\sqrt{\pi}\hbar^2a_{jj}}{3m_j^{2/5}}  \left(\sum_{l=1}^2 \frac{a_{ll}}{m_l^{2/5}}|\psi_l|^2\!\right)^{\!3/2}\right] \psi_j, 
\end{align}
where the wave functions are normalized such that $\int_{\mathbb{R}^3}|\psi_j(\mathbf{x}, t)|^2 \mathrm{d} \mathbf{x}=N_j$ for $j=1,2$, with $N_j$ the number of particles in $j$-th component.

\subsection{Non-dimensionalization}\label{sec:Non-dim}
To non-dimensionalize the eGPE \eqref{eq: eGPE_system} under the normalization condition, we introduce the following scaling:
\begin{equation}\label{eq: dimensionless scaling}
    \tilde{t}=\omega t,\;\; \tilde{\mathbf{x}}=\frac{\mathbf{x}}{x_s},\;\; \tilde{\psi}_j(\tilde{\mathbf{x}},\tilde{t})= \frac{x_s^{3/2}\psi_j(\mathbf{x},t)}{\sqrt{N}},\; j=1,2,\;\; \tilde{E}(\tilde{\psi}_1,\tilde{\psi}_2)=\frac{E(\psi_1,\psi_2)}{N\hbar\omega},
\end{equation}
where $x_s=\sqrt{\frac{\hbar}{m_1\omega}}$ is the characteristic length scale with $\omega>0$ a reference frequency to be determined later, and $N=N_1+N_2$ is the total number of particles. Substituting \eqref{eq: dimensionless scaling} into \eqref{eq: eGPE_system} and omitting all $\tilde{~}$ for simplicity, we obtain the dimensionless eGPE:
\begin{equation}\label{eq: dimensionless eGPE}
\mathrm{i}\partial_t\psi_j =\!\left[-\frac{\varepsilon_j}{2}\nabla^2+V_j(\mathbf{x})+\sum_{l=1}^2\alpha_{jl}|\psi_l|^2+\delta\beta_j\left(\sum_{l=1}^2\beta_{l}|\psi_l|^2\!\right)^{\!3/2}\right]\psi_j,\;\; j=1,2,
\end{equation}
where the dimensionless parameters are defined as follows:
\begin{align*}
        &\varepsilon_j=\frac{m_1}{m_j},\quad V_j(\mathbf{x})=\frac{1}{2}\sum_{\nu=x,y,z}\gamma_{\nu,j}^2\nu^2,\quad \gamma_{\nu,j}=\frac{\omega_{\nu,j}}{\omega\sqrt{\varepsilon_j}},\quad \nu \in \{x,y,z\},\quad j=1,2, \\
        &\alpha_{jl} = \frac{2\pi a_{jl}N(\varepsilon_j+\varepsilon_l)}{x_s},\quad 
        \beta_j=\frac{a_{jj}\varepsilon_j^{2/5}}{x_s},\quad j,l=1,2,\quad  \delta=\frac{128\sqrt{\pi}}{3}N^{3/2}.
\end{align*}
The dimensionless energy (per particle) is
\begin{align}\label{eq:dimensionlessEnergy}
E(\psi_1,\psi_2)&= \int_{\mathbb{R}^3} \left[ \sum_{j=1}^2\left(\frac{\varepsilon_j}{2}\left|\nabla \psi_j\right|^2+V_j(\mathbf{x})|\psi_j|^2\right) 
+\frac{1}{2} \sum_{j,l=1}^2 \alpha_{jl} |\psi_j|^2|\psi_l|^2\right. \\
&\quad + \left.\frac{2\delta}{5}\left(\sum_{l=1}^2\beta_{l}|\psi_l|^2\right)^{5/2}\right] \mathrm{d} \mathbf{x}, \nonumber
\end{align}
and the normalization condition of the dimensionless wave functions becomes
\begin{equation}\label{normalization for dimensionless eGPE2}
\int_{\mathbb{R}^3}|\psi_j(\mathbf{x}, t)|^2 \mathrm{d} \mathbf{x}=\frac{N_j}{N},\quad j=1,2.
\end{equation}
For systems with harmonic trapping potentials (i.e., $\omega_{\nu,j}>0$, $\nu\in\{x,y,z\}$, $j=1,2$), we choose $\omega = \min_{j=1,2}\{\omega_{x,j}, \omega_{y,j},\omega_{z,j}\}$ \cite{cappellaro_collective_2018}. For free-space droplets (i.e., $V_1(\mathbf{x})=V_2(\mathbf{x})=0$), we adopt the choice used in \cite{petrov_quantum_2015}:
\begin{equation}\label{eq:omega-free-space}
\omega=\frac{25\pi^2\hbar\,|\delta a|^3}{384m\,a_{11}a_{22}\left(\sqrt{a_{11}}+\sqrt{a_{22}}\right)^6}\quad
\mbox{with}\;\; \delta a = a_{12} + \sqrt{a_{11}a_{22}}.
\end{equation}

\begin{remark}
We note that almost all experiments studying the dynamics of quantum droplets in free space begin with the ground state of a BEC in a harmonic trap, which is subsequently turned off at $t=0$ to allow the system to evolve \cite{derrico_observation_2019,guo2021lee,semeghini_self-bound_2018}. Therefore, in such dynamical studies, it is also natural to choose $\omega$ as the minimal frequency of the initial harmonic trap used for preparation. 
\end{remark}


For compactness, particularly for the description of numerical methods, the dimensionless eGPE \eqref{eq: dimensionless eGPE} can be written in a vector form:
\begin{equation} \label{eq: compact eGPE}
    \mathrm{i}\frac{\partial \Psi}{\partial t}=-\frac{1}{2}\boldsymbol{\varepsilon}\odot\nabla^2\Psi+\mathbf{V}(\mathbf{x})\odot\Psi+\mathbf{A}(\Psi)\odot\Psi+\mathbf{B}(\Psi)\odot\Psi,
\end{equation}
where $\Psi = (\psi_1, \psi_2)^T$, $\boldsymbol{\varepsilon}=(\varepsilon_1,\varepsilon_2)^T$, and $\mathbf{V}(\mathbf{x})=(V_1(\mathbf{x}),V_2(\mathbf{x}))^T$. The nonlinear vector functions are defined as
\begin{subequations}\label{eq: nonlinear terms 3d}
\begin{align}
    &\mathbf{A}(\Psi)=(A_1(\Psi),A_2(\Psi))^T,\quad A_j(\Psi)=\sum_{l=1}^2\alpha_{jl}|\psi_l|^2,\quad j=1,2,\\
    &\mathbf{B}(\Psi)=(B_1(\Psi),B_2(\Psi))^T,\quad B_j(\Psi)=\delta\beta_j\left(\sum_{l=1}^2\beta_{l}|\psi_l|^2\right)^{3/2},\quad j=1,2.
\end{align}
\end{subequations}
Here, $\odot$ denotes the Hadamard (element-wise) product between two vectors, i.e., for $\mathbf{{U}}=(u_1, u_2)^{T}$ and $\mathbf{{V}}=(v_1, v_2)^{T}$, $\mathbf{U}\odot\mathbf{V}=(u_1v_1, u_2v_2)^{T}$.

\subsection{Reduction to lower dimensions under strongly anisotropic potentials}
Under strongly anisotropic external potentials, the 3D eGPE can be approximately reduced to 2D or 1D models. This reduction relies on the assumption that the time evolution does not excite degrees of freedom along the tightly confined directions, where the energy gap is much larger than the interaction energy.

\subsubsection*{Case I: Disk-shaped condensation (3D $\to$ 2D)}
Consider the case where the confinement along the $z$-axis is significantly stronger than in the transverse plane. Physically, the large trap frequency $\omega_z$ creates a large energy gap between the ground state and the excited states along the $z$-axis. Provided the interaction energy is small compared to this gap, the dynamics along the $z$-direction are effectively ``frozen" in the ground state. 

To allow for an exact reduction of the nonlinear terms, we assume the trap frequencies satisfy $m_1 \omega_{z,1} = m_2 \omega_{z,2}$. In dimensionless variables, this implies that the confinement strength $\gamma_z$ dominates over the transverse trapping frequencies (which may vanish), i.e.,
\begin{equation}
    \frac{\gamma_{z,1}}{\sqrt{\varepsilon_1}} = \frac{\gamma_{z,2}}{\sqrt{\varepsilon_2}} \equiv \gamma_z \gg \gamma_{x,j}, \gamma_{y,j} \ge 0, \quad j=1,2.
\end{equation}
Under this condition, the ground states of the two components along the $z$-direction share the same spatial profile. Based on the ``frozen" assumption, we employ the separation of variables ansatz:
\begin{equation}\label{eq: ansatz 2D}
    \Psi(\mathbf{x},t) = \Psi_2(x,y,t) \odot \Psi_{\mathrm{ho}}(z), \quad \Psi_{\mathrm{ho}}(z) = (\psi_{\mathrm{ho}}(z), \psi_{\mathrm{ho}}(z))^T,
\end{equation}
where $\psi_{\mathrm{ho}}(z) = (\frac{\gamma_z}{\pi})^{1/4} \mathrm{e}^{-\frac{\gamma_z z^2}{2}}$. Substituting \eqref{eq: ansatz 2D} into the 3D eGPE, multiplying by $\Psi_{\mathrm{ho}}^{*}(z)$ (element-wise), and integrating over $z$, we first obtain an intermediate equation containing the zero-point energy:
\begin{equation}
    \mathrm{i}\frac{\partial \Psi_2}{\partial t}=-\frac{1}{2}\boldsymbol{\varepsilon}\odot\nabla_\perp^2\Psi_2+(\mathbf{V}_2(x,y)+\mathbf{C})\odot\Psi_2+\mathbf{A}_2(\Psi_2)\odot\Psi_2+\mathbf{B}_2(\Psi_2)\odot\Psi_2,
\end{equation}
where $\nabla_\perp^2 = \partial_x^2 + \partial_y^2$ is the transverse Laplacian, and $\mathbf{C}=(c_1, c_2)^T$ represents the constant energy shift from the $z$-confinement, with 
\[ c_j=\int_{-\infty}^{\infty}\left[\frac{\varepsilon_j}{2}|\psi_{\mathrm{ho}}'(z)|^2+\frac{1}{2}\gamma_{z}^2 z^2|\psi_{\mathrm{ho}}(z)|^2\right]\mathrm{d}z = \frac{\varepsilon_j \gamma_z}{2},\quad j=1,2. \]
Since the physical observables depend only on the modulus of the wave function, we can remove the constant potential term via the gauge transformation $\Psi_2 \to \Psi_2 \odot \mathrm{e}^{-\mathrm{i}\mathbf{C}t}$. This leads to the following 2D eGPE:
\begin{equation}
    \mathrm{i}\frac{\partial \Psi_2}{\partial t}=-\frac{1}{2}\boldsymbol{\varepsilon}\odot\nabla_\perp^2\Psi_2+\mathbf{V}_2(x,y)\odot\Psi_2+\mathbf{A}_2(\Psi_2)\odot\Psi_2+\mathbf{B}_2(\Psi_2)\odot\Psi_2.
\end{equation}
Here, the effective 2D potentials and nonlinear terms are defined as
\begin{align*}
    &\mathbf{V}_2(x,y) = (V_{2,1}, V_{2,2})^T, \quad 
    \mathbf{A}_2(\Psi_2) = (A_{2,1}, A_{2,2})^T, \quad 
    \mathbf{B}_2(\Psi_2) = (B_{2,1}, B_{2,2})^T, \\
    & V_{2,j} = \frac{1}{2}(\gamma_{x,j}^2 x^2 + \gamma_{y,j}^2 y^2), \;\;\; 
    A_{2,j} = \sum_{l=1}^2 \alpha_{2,jl} |\psi_{2,l}|^2, \;\;\;
    B_{2,j} = \delta_2 \beta_j \left(\sum_{l=1}^2 \beta_l |\psi_{2,l}|^2 \right)^{3/2}
\end{align*}
with the renormalized coupling constants calculated analytically as:
\begin{equation*}
    \alpha_{2,jl} = \alpha_{jl} \int_{-\infty}^{\infty} |\psi_{\mathrm{ho}}(z)|^4 \mathrm{d}z = \alpha_{jl}\sqrt{\frac{\gamma_{z}}{2\pi}}, \quad 
    \delta_2 = \delta \int_{-\infty}^{\infty} |\psi_{\mathrm{ho}}(z)|^5 \mathrm{d}z = \delta \sqrt{\frac{2}{5}}\left(\frac{\gamma_z}{\pi}\right)^{3/4}.
\end{equation*}

\subsubsection*{Case II: Cigar-shaped condensation (3D $\to$ 1D)}
Consider the cigar-shaped geometry where the confinement in the transverse plane ($y$-$z$ plane) is much stronger than along the longitudinal $x$-axis. Similar to the disk-shaped case, we assume the trap frequencies satisfy the matching conditions $m_1 \omega_{y,1} = m_2 \omega_{y,2}$ and $m_1 \omega_{z,1} = m_2 \omega_{z,2}$. In dimensionless units, this implies
\begin{equation}
    \frac{\gamma_{y,1}}{\sqrt{\varepsilon_1}} = \frac{\gamma_{y,2}}{\sqrt{\varepsilon_2}} \equiv \gamma_y, \quad \frac{\gamma_{z,1}}{\sqrt{\varepsilon_1}} = \frac{\gamma_{z,2}}{\sqrt{\varepsilon_2}} \equiv \gamma_z, \quad \text{with } \gamma_y, \gamma_z \gg \gamma_{x,j}\ge 0.
\end{equation}
Assuming the transverse degrees of freedom are frozen in the ground state, we use the ansatz:
\begin{equation}
    \Psi(\mathbf{x},t) = \Psi_1(x,t) \odot \Psi_{\mathrm{ho}}(y,z), \quad \Psi_{\mathrm{ho}}(y,z) = (\psi_{\mathrm{ho}}(y,z), \psi_{\mathrm{ho}}(y,z))^T,
\end{equation}
where the normalized transverse ground state is $\psi_{\mathrm{ho}}(y,z) = \frac{(\gamma_y \gamma_z)^{1/4}}{\sqrt{\pi}} \mathrm{e}^{-\frac{\gamma_y y^2 + \gamma_z z^2}{2}}$. Following the same reduction procedure as in the 2D case---integrating out the $y, z$ coordinates and removing the constant energy shift 
via a gauge transformation---we derive the effective 1D eGPE:
\begin{equation}
    \mathrm{i}\frac{\partial \Psi_1}{\partial t}=-\frac{1}{2}\boldsymbol{\varepsilon}\odot\partial_{xx}\Psi_1+\mathbf{V}_1(x)\odot\Psi_1+\mathbf{A}_1(\Psi_1)\odot\Psi_1+\mathbf{B}_1(\Psi_1)\odot\Psi_1.
\end{equation}
Here, $\mathbf{V}_1(x) = (V_{1,1}, V_{1,2})^T$ with $V_{1,j} = \frac{1}{2}\gamma_{x,j}^2 x^2$. The effective interaction terms $\mathbf{A}_1$ and $\mathbf{B}_1$ take the same functional forms as in Eqs. (2.16)-(2.17) but with $\Psi_2$ replaced by $\Psi_1$ and the coefficients $\alpha_{2,jl},\delta_2$ renormalized to
\begin{equation*}
    \alpha_{1,jl} = \alpha_{jl} \int_{\mathbb{R}^2} |\psi_{\mathrm{ho}}|^4 \mathrm{d}y\mathrm{d}z = \alpha_{jl} \frac{\sqrt{\gamma_y\gamma_z}}{2\pi}, \quad 
    \delta_1 = \delta \int_{\mathbb{R}^2} |\psi_{\mathrm{ho}}|^5 \mathrm{d}y\mathrm{d}z = \frac{2\delta}{5} \left(\frac{\gamma_y\gamma_z}{\pi^2}\right)^{3/4}.
\end{equation*}

\subsubsection*{Unified Form}
In summary, the original 3D eGPE and the derived 2D and 1D effective models can be cast into a unified vector form:
\begin{equation}\label{eq: unified eGPE}
    \mathrm{i}\frac{\partial \Psi}{\partial t}=-\frac{1}{2}\boldsymbol{\varepsilon}\odot\nabla^2\Psi+\mathbf{V}_d(\mathbf{x})\odot\Psi+\mathbf{A}_d(\Psi)\odot\Psi+\mathbf{B}_d(\Psi)\odot\Psi, \quad \mathbf{x}\in\mathbb{R}^d,
\end{equation}
where $\nabla^2$ denotes the Laplacian in $d$ dimensions ($d=1,2,3$). For the 3D case ($d=3$), the functions $\mathbf{V}_3$, $\mathbf{A}_3$, and $\mathbf{B}_3$ correspond to the original definitions in Eqs. \eqref{eq: compact eGPE}--\eqref{eq: nonlinear terms 3d}. For the lower-dimensional cases ($d=1,2$), they correspond to the effective terms derived in Case I and Case II with renormalized coupling constants. The normalization condition remains $\int_{\mathbb{R}^d} |\psi_j|^2 \mathrm{d}\mathbf{x} = N_j/N$.

\begin{remark}\label{rem: dimension_reduction}
    We clarify the validity of the dimensional reduction employed here, which depends critically on the comparison between the transverse confinement length $a_{\perp}$ and the characteristic interaction (healing) length $\xi$ \cite{shamriz_suppression_2020}. When $a_{\perp} \ll \xi$, the extreme confinement alters the scattering properties, necessitating a 2D renormalization that leads to logarithmic LHY corrections ($\sim n^2 \ln n$) \cite{lieb2005mathematics,petrov_ultradilute_2016}. In contrast, the condition $a_{\perp} \gg \xi$ implies that the collision dynamics remain effectively three-dimensional. This latter case corresponds to the regime considered in our work, where we retain the 3D LHY nonlinearity ($\sim n^{5/2}$) and perform the reduction solely by integrating out the transverse direction (under the frozen-state assumption) \cite{lieb2005mathematics,shamriz_suppression_2020}. 
\end{remark}

%
%

\subsection{Single-component reduction: The density-locked model}
According to Petrov's theory \cite{petrov_quantum_2015}, a two-component mixture in the droplet regime in free space energetically favors a specific density ratio $n_1/n_2 = \sqrt{a_{22}/a_{11}}$. Physically, if the total particle numbers $N_1$ and $N_2$ deviate from this optimal ratio, the excess atoms are expelled from the droplet core into the background via self-evaporation, leaving a core that satisfies the locking condition (subject to a small critical deviation $\sim \delta a/a_{ii}$). 
Therefore, in the balanced case where the global particle number ratio matches this optimal value (i.e., $N_1/N_2 = \sqrt{a_{22}/a_{11}}$), there are no excess atoms, and the density locking condition $|\psi_1|^2/|\psi_2|^2 = N_1/N_2$ holds globally. Assuming the two components also share the same phase, we can introduce the single-mode ansatz $\psi_j(\mathbf{x},t) = \sqrt{N_j/N}\psi(\mathbf{x},t)$, which reduces the coupled system to a single effective equation \cite{flynn_quantum_2023}. Although derived for free space, this reduction can also be generalized to systems under harmonic confinement \cite{petrov_quantum_2015}.

Substituting this single-mode ansatz into the general energy functional \eqref{eq: energy functional} with $\mathcal{E}_{\mathrm{LHY}}$ given in \eqref{eq:eLHYdensity}, assuming $m_1=m_2=m$ and $V_1=V_2=V$, and utilizing the relation $\delta a = a_{12} + \sqrt{a_{11}a_{22}}$, the effective single-component energy functional is derived as
\begin{align}
    \mathcal{E}(\psi) &= \int_{\mathbb{R}^3} \left[ \frac{\hbar^2}{2m}|\nabla\psi|^2 + V(\mathbf{x})|\psi|^2 + \frac{4\pi\hbar^2}{m}\frac{\delta a \sqrt{a_{11}a_{22}}}{(\sqrt{a_{11}}+\sqrt{a_{22}})^2}|\psi|^4 \right.\\
    &\nonumber\qquad\qquad \left.+ \frac{256\sqrt{\pi}\hbar^2}{15m}(a_{11}a_{22})^{5/4}|\psi|^5 \right] \mathrm{d}\mathbf{x}.
\end{align}
Based on the variation of this functional with respect to $\psi^*$, we can obtain the density-locked eGPE:
\begin{equation}\label{eq: eGPE density locked}
    \mathrm{i}\hbar\frac{\partial \psi}{\partial t} = \left[-\frac{\hbar^2}{2m}\nabla^2 \!+\! V(\mathbf{x}) \!+\! \frac{8\pi\hbar^2}{m}\frac{\delta a \sqrt{a_{11}a_{22}}}{(\sqrt{a_{11}}+\sqrt{a_{22}})^2}|\psi|^2 \!+\! \frac{128\sqrt{\pi}\hbar^2}{3m}(a_{11}a_{22})^{5/4}|\psi|^3 \right]\psi.
\end{equation}
We apply the same non-dimensionalization scaling as in Section~\ref{sec:Non-dim}, i.e., $t \to t/\omega$, $\mathbf{x} \to x_s \mathbf{x}$, and $\psi \to \sqrt{N} x_s^{-3/2} \psi$ with $x_s=\sqrt{\frac{\hbar}{m\omega}}$. The resulting dimensionless equation is
\begin{equation}\label{eq: dimensionless density locked}
    \mathrm{i}\frac{\partial \psi}{\partial t} = \left[-\frac{1}{2}\nabla^2 + V(\mathbf{x}) + \alpha |\psi|^2 + \beta |\psi|^3 \right]\psi, \quad \int_{\mathbb{R}^3} |\psi|^2 \mathrm{d}\mathbf{x} = 1,
\end{equation}
where the dimensionless parameters are
\begin{equation}\label{eq:alpha-and-beta}
    \alpha = \frac{8\pi \delta a \sqrt{a_{11}a_{22}}\,N }{x_s (\sqrt{a_{11}}+\sqrt{a_{22}})^2}, \quad \beta = \frac{128\sqrt{\pi}(a_{11}a_{22})^{5/4}N^{3/2}}{3 x_s^{5/2}}.
\end{equation}
Here, the potential $V(\mathbf{x})$ follows the definitions in the general model.
Similar to the general two-component model, the 3D density-locked equation can be reduced to lower dimensions under strong confinement assumptions. By integrating out the frozen ground state coordinates, the original 3D equation and the effective 1D and 2D models can be expressed in the following unified form:
\begin{equation}\label{eq: unified density locked}
    \mathrm{i}\frac{\partial \psi}{\partial t} = \left[-\frac{1}{2}\nabla^2 + V_d(\mathbf{x}) + \alpha_d |\psi|^2 + \beta_d |\psi|^3 \right]\psi, \quad \mathbf{x} \in \mathbb{R}^d,
\end{equation}
where $\nabla^2$ is the $d$-dimensional Laplacian. The scalar potential $V_d(\mathbf{x})$ is defined as
\begin{equation}
    V_d(\mathbf{x}) = \begin{cases}
        \frac{1}{2}\gamma_x^2 x^2, & d=1, \\
        \frac{1}{2}(\gamma_x^2 x^2 + \gamma_y^2 y^2), & d=2, \\
        \frac{1}{2}(\gamma_x^2 x^2 + \gamma_y^2 y^2 + \gamma_z^2 z^2), & d=3.
    \end{cases}
\end{equation}
The dimension-dependent coefficients $\alpha_d$ and $\beta_d$, obtained after renormalizing the interaction strengths, are given by
\begin{equation}
    \alpha_d = \begin{cases}
        \frac{\sqrt{\gamma_y\gamma_z}}{2\pi} \alpha, & d=1, \\
        \sqrt{\frac{\gamma_z}{2\pi}} \alpha, & d=2, \\
        \alpha, & d=3,
    \end{cases}
    \qquad
    \beta_d = \begin{cases}
          \frac{2}{5}\left(\frac{\gamma_y\gamma_z}{\pi^2}\right)^{3/4} \beta, & d=1, \\
          \sqrt{\frac{2}{5}}\left(\frac{\gamma_z}{\pi}\right)^{3/4} \beta, & d=2, \\
         \beta, & d=3.
    \end{cases}
\end{equation}
The normalization condition for the unified equation is $\int_{\mathbb{R}^d} |\psi|^2 \mathrm{d}\mathbf{x} = 1$.

\begin{remark}
While the density-locked model is formally derived based on ground-state energy minimization, it is also widely employed to describe the dynamics of quantum droplets \cite{flynn_quantum_2023,petrov_quantum_2015}. The validity of this dynamical approximation rests on the assumption of ``low-energy excitations," specifically that the system evolves primarily through in-phase collective modes (e.g., breathing modes) where the density ratio remains locked. If the excitation energy is sufficiently high to trigger out-of-phase motion (spin modes) or significant self-evaporation, the single-component description may break down. A rigorous mathematical analysis of the error bounds for this dynamical reduction is beyond the scope of the current study and remains an interesting topic for future research.
\end{remark}

\section{Numerical methods for ground state solutions}\label{sec:Numerical methods}
In this section, we present efficient numerical methods for computing the ground states of quantum droplets. While our discussion focuses on the general two-component model, the methodology applies directly to the density-locked model as a simplified case.

\subsection{Energy minimization and Euler-Lagrange equations}
The ground state solution $\Phi_g(\mathbf{x})$ is strictly defined as the minimizer of the energy functional $E(\Phi)$ under the normalization constraint:
\begin{equation} \label{eq: minimization problem}
    E(\Phi_g) = \min_{\Phi \in \mathbb{U}} E(\Phi), \;\;\; \mathbb{U} = \left\{ \Phi = (\phi_1, \phi_2)^T  \Big|  E(\Phi)<\infty, \|\phi_j\|^2 = \frac{N_j}{N}, j=1,2 \right\},
\end{equation}
where the dimensionless energy functional $E(\Phi)$ is given by
\begin{equation} \label{eq: energy functional definition}
    E(\Phi) \!=\!\! \int_{\mathbb{R}^d} \!\!\left[ \sum_{j=1}^2\! \left( \frac{\varepsilon_j}{2}|\nabla\phi_j|^2 \!+\! V_{d,j}|\phi_j|^2\! \right) \!+\! \sum_{j,l=1}^2\! \frac{\alpha_{d,jl}}{2}|\phi_j|^2|\phi_l|^2 \!+\! \frac{2\delta_d}{5} \!\left(\sum_{l=1}^2\beta_{d,l}|\phi_l|^2\!\right)^{\!\!5/2} \right]\! \mathrm{d}\mathbf{x}.\!
\end{equation}
Clearly, every critical point (including the minimizer) of the energy minimization problem \eqref{eq: energy functional definition} satisfies the Euler-Lagrange equations:
\begin{equation}\label{eq: nonlinear eigenvalue problem}
    \boldsymbol{\mu}\odot\Phi(\mathbf{x})=-\frac{1}{2}\boldsymbol{\varepsilon}\odot\nabla^2\Phi(\mathbf{x})+\mathbf{V}_d(\mathbf{x})\odot\Phi(\mathbf{x})+\mathbf{A}_d(\Phi)\odot\Phi(\mathbf{x})+\mathbf{B}_d(\Phi)\odot\Phi(\mathbf{x}),
\end{equation}
under the normalization conditions $\|\phi_j\|^2={N_j}/{N},\, j=1,2$.
Here, $\mathbf{V}_d, \mathbf{A}_d, \mathbf{B}_d$ follow the definitions in Section~\ref{sec:eGPE}, and the Lagrange multipliers $\boldsymbol{\mu}=(\mu_1, \mu_2)^T$ are also called the chemical potentials. When $\Phi$ is an eigenfunction, the corresponding chemical potentials $\mu_j$ ($j=1,2$) can be explicitly computed as
\begin{align} \label{eq: chemical potential def}
    \mu_j (\Phi) &= \frac{N}{N_j} \int_{\mathbb{R}^d} \left[ \frac{\varepsilon_j}{2}|\nabla\phi_j|^2 + V_{d,j}|\phi_j|^2 + A_{d,j}(\Phi)|\phi_j|^2 + B_{d,j}(\Phi)|\phi_j|^2 \right] \mathrm{d}\mathbf{x}.
\end{align}

\begin{remark}
    We note that the nonlinear eigenvalue problem \eqref{eq: nonlinear eigenvalue problem} constitutes the time-independent eGPE, which can be recovered from the time-dependent eGPE \eqref{eq: unified eGPE} under the stationary wave function ansatz $\Psi(\mathbf{x},t) = e^{-\mathrm{i}\boldsymbol{\mu} t}\odot\Phi(\mathbf{x})$.
\end{remark}

\begin{remark}
    It is worth emphasizing that for nonlinear systems, the total energy $E(\Phi)$ is generally not equal to the sum of the chemical potentials (eigenvalues). Unlike linear problems, the ground state—defined as the global minimizer of the energy functional—does not necessarily correspond to the eigenstate associated with the smallest chemical potentials. Therefore, determining the ground state should be treated as a constrained optimization problem rather than merely solving for the smallest eigenvalues.
\end{remark}

For the conventional two-component BEC, i.e., no $\mathbf{B}(\Phi)$ term, the minimizer was computed by the CNGF or the imaginary time method, etc. Here we extend the CNGF and its discretization to quantum droplets in Bose-Bose mixtures.

\subsection{Normalized gradient flows}
We construct the following CNGF for the ground state computation:
\begin{equation}\label{eq: CNGF}
    \!\!\left\{\begin{aligned}
        &\partial_t\Phi(\mathbf{x}, t)=\frac{1}{2}\boldsymbol{\varepsilon}\odot\nabla^2\Phi-\big[\mathbf{V}_d(\mathbf{x})+\mathbf{A}_d(\Phi)+\mathbf{B}_d(\Phi)-\boldsymbol{\mu}({\Phi}(\cdot,t))\big]\odot\Phi,\;\; t\ge0,\\
        &\Phi(\mathbf{x},0)=\Phi_0(\mathbf{x}), \quad \text{with} \quad \|\phi_{0,j}\|^2 = \frac{N_j}{N}, \quad j=1,2,
    \end{aligned}\right.
\end{equation}
where $\boldsymbol{\mu}({\Phi}(\cdot,t))$ ensures the normalization conservation and is defined by Eq. \eqref{eq: chemical potential def}.
The CNGF satisfies the following properties.

\begin{theorem}
    Suppose $V_{d,j}(\mathbf{x}) \ge 0$ and the initial data $\Phi_0$ satisfies the normalization condition. Then the CNGF \eqref{eq: CNGF} preserves the normalization and diminishes the energy, i.e.,
    \begin{align}
            &\|\phi_j(\cdot,t)\|^2=\frac{N_j}{N},\quad t\ge0,\quad j=1,2,\label{eq: normalization conservation of CNGF}\\
            &\frac{\mathrm{d}}{\mathrm{d}t}E(\Phi(\cdot,t)) = -2 \sum_{j=1}^2 \|\partial_t \phi_j(\cdot,t)\|^2 \le 0,
    \end{align}
    which implies $E(\Phi(\cdot,t_1))\ge E(\Phi(\cdot,t_2))$ for any $0\le t_1\le t_2<\infty$.
\end{theorem}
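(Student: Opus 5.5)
The plan is a direct computation along smooth solutions of \eqref{eq: CNGF}, in two steps. I assume the solution is regular enough (e.g. $\phi_j(\cdot,t)\in H^2$ with $V_{d,j}|\phi_j|^2$ integrable and enough decay at infinity) that integration by parts produces no boundary terms.

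\textbf{Step 1: normalization.} Take the $j$-th component of \eqref{eq: CNGF}, multiply by $\phi_j^*$, integrate over $\mathbb{R}^d$, and add the complex conjugate. Integrating the Laplacian term by parts gives
\[
\frac{\mathrm{d}}{\mathrm{d}t}\|\phi_j\|^2=-2\int_{\mathbb{R}^d}\Big[\tfrac{\varepsilon_j}{2}|\nabla\phi_j|^2+(V_{d,j}+A_{d,j}+B_{d,j})|\phi_j|^2\Big]\mathrm{d}\mathbf{x}+2\mu_j(\Phi)\|\phi_j\|^2 .
\]
By \eqref{eq: chemical potential def} the integral equals $\frac{N_j}{N}\mu_j(\Phi)$. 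Setting $y_j(t)=\|\phi_j(\cdot,t)\|^2-N_j/N$, this becomes the linear ODE $y_j'=2\mu_j(\Phi(\cdot,t))\,y_j$ with $y_j(0)=0$. Provided $\mu_j$ is locally bounded in time (it is finite as long as the energy-type integrals are finite), uniqueness gives $y_j\equiv0$, which is \eqref{eq: normalization conservation of CNGF}.

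\textbf{Step 2: energy decay.} Differentiate $E(\Phi(\cdot,t))$ from \eqref{eq: energy functional definition}. The first variation is $\frac{\mathrm{d}}{\mathrm{d}t}E=\sum_j 2\,\mathrm{Re}\int \big(\tfrac{\delta E}{\delta\phi_j^*}\big)\partial_t\phi_j^*\,\mathrm{d}\mathbf{x}$, where
\[
\frac{\delta E}{\delta\phi_j^*}=-\tfrac{\varepsilon_j}{2}\nabla^2\phi_j+(V_{d,j}+A_{d,j}+B_{d,j})\phi_j .
\]
Two checks go into this formula. The quartic term, being symmetric in $j,l$, produces $\sum_l\alpha_{d,jl}|\phi_l|^2\phi_j=A_{d,j}\phi_j$. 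The LHY term gives $\frac{2\delta_d}{5}\cdot\frac52\big(\sum_l\beta_{d,l}|\phi_l|^2\big)^{3/2}\beta_{d,j}\phi_j=B_{d,j}\phi_j$; the map $s\mapsto s^{5/2}$ is $C^1$ on $s\ge0$, so no singularity arises.

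By \eqref{eq: CNGF}, $\frac{\delta E}{\delta\phi_j^*}=-\partial_t\phi_j+\mu_j\phi_j$. Substituting,
\[
\frac{\mathrm{d}}{\mathrm{d}t}E=-2\sum_j\|\partial_t\phi_j\|^2+\sum_j\mu_j\,\frac{\mathrm{d}}{\mathrm{d}t}\|\phi_j\|^2 .
\]
The last sum vanishes by Step 1, which yields the claimed identity. Integrating it over $[t_1,t_2]$ gives the monotonicity of $E$.

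The main obstacle is justification rather than algebra. One has to ensure that $\mu_j(\Phi(\cdot,t))$ stays finite, that the boundary terms vanish, and that $E$ may be differentiated under the integral. The hypothesis $V_{d,j}\ge0$, together with the positivity of the LHY term, is what lets the energy control the gradient and potential terms. I would state these regularity assumptions explicitly and treat the argument as a formal computation for sufficiently smooth, decaying solutions, as is standard for CNGF results.
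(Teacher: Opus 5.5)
Your proposal is correct and follows the same two-step route as the paper. First, you test the $j$-th equation against $\phi_j^*$ to control $\|\phi_j\|^2$. Then you substitute $\frac{\delta E}{\delta \phi_j^*}=\mu_j\phi_j-\partial_t\phi_j$ into the chain rule, so that the $\mu_j$-term is removed by the first step.

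The only real difference is in the normalization step, and it works in your favour. In \eqref{eq: chemical potential def}, $\mu_j$ carries the prefactor $N/N_j$ rather than $1/\|\phi_j\|^2$. So the paper's claim that the right-hand side vanishes identically tacitly assumes the normalization it is trying to prove. Your identity $y_j'=2\mu_j y_j$ with $y_j(0)=0$, combined with uniqueness for this linear ODE, closes that loop rigorously.
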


\begin{proof}
    First, we verify the normalization conservation. Recall that for a complex function, $\frac{\mathrm{d}}{\mathrm{d}t}\|\phi_j\|^2 = \int (\partial_t \phi_j \phi_j^* + \phi_j \partial_t \phi_j^*) \mathrm{d}\mathbf{x} = 2 \mathrm{Re} \int \phi_j^* \partial_t \phi_j \mathrm{d}\mathbf{x}$. Multiplying the $j$-th component of the CNGF Eq. \eqref{eq: CNGF} by $\phi_j^*$, integrating over $\mathbb{R}^d$, we obtain
    \begin{equation*}
        \int_{\mathbb{R}^d} \phi_j^* \partial_t \phi_j \mathrm{d}\mathbf{x} \!=\! \int_{\mathbb{R}^d}\! \phi_j^* \left[ \frac{\varepsilon_j}{2}\nabla^2\phi_j - V_{d,j}\phi_j - A_{d,j}(\Phi)\phi_j - B_{d,j}(\Phi)\phi_j + \mu_{j}(\Phi(\cdot,t))\phi_j \right] \!\mathrm{d}\mathbf{x}.
    \end{equation*}
    Let $\mathcal{H}_j$ denote the nonlinear Hamiltonian operator inside the brackets (excluding $\mu_{j}(\Phi)$). Since $\mathcal{H}_j$ is Hermitian (i.e., $\int \phi^* \mathcal{H} \phi$ is real) and $\mu_{j}(\Phi)$ is real by definition, the integral on the right-hand side is purely real. 
    By the definition of $\mu_{j}(\Phi)$ in Eq. \eqref{eq: chemical potential def}, we have $\mu_{j}(\Phi) = \frac{\int \phi_j^* \mathcal{H}_j \phi_j}{\|\phi_j\|^2}$, which ensures that the right-hand side of the integrated equation is exactly zero. Thus, $\frac{\mathrm{d}}{\mathrm{d}t}\|\phi_j\|^2 = 2 \mathrm{Re}(0) = 0$, confirming Eq. \eqref{eq: normalization conservation of CNGF}.

    Next, we compute the time derivative of the energy functional $E(\Phi)$. Using the chain rule for complex functionals, we have
    \begin{equation*}
        \frac{\mathrm{d}}{\mathrm{d}t}E(\Phi) = \sum_{j=1}^2 \int_{\mathbb{R}^d} \left( \frac{\delta E}{\delta \phi_j} \partial_t \phi_j + \frac{\delta E}{\delta \phi_j^*} \partial_t \phi_j^* \right) \mathrm{d}\mathbf{x} = 2 \mathrm{Re} \sum_{j=1}^2 \int_{\mathbb{R}^d} \frac{\delta E}{\delta \phi_j^*} \partial_t \phi_j^* \mathrm{d}\mathbf{x}.
    \end{equation*}
    Computing the variational derivative with respect to $\phi_j^*$ yields
    \begin{equation*}
        \frac{\delta E}{\delta \phi_j^*} = -\frac{\varepsilon_j}{2} \nabla^2 \phi_j + V_{d,j}\phi_j + A_{d,j}(\Phi)\phi_j + B_{d,j}(\Phi)\phi_j,\quad j=1,2.
    \end{equation*}
    Comparing this with the CNGF equation \eqref{eq: CNGF}, we observe that $\partial_t \phi_j = - \frac{\delta E}{\delta \phi_j^*} + \mu_{j}(\Phi)\phi_j$. Or equivalently, $\frac{\delta E}{\delta \phi_j^*} = \mu_{j}(\Phi)\phi_j - \partial_t \phi_j$. Substituting this back into the energy derivative gives
    \begin{align*}
        \frac{\mathrm{d}}{\mathrm{d}t}E(\Phi) &= 2 \mathrm{Re} \sum_{j=1}^2 \int_{\mathbb{R}^d} (\mu_{j}(\Phi)\phi_j - \partial_t \phi_j) \partial_t \phi_j^* \mathrm{d}\mathbf{x} \\
        &= \sum_{j=1}^2 \left[ 2\mu_{\Phi,j} \,\mathrm{Re} \int_{\mathbb{R}^d} \phi_j \partial_t \phi_j^* \mathrm{d}\mathbf{x} - 2 \int_{\mathbb{R}^d} |\partial_t \phi_j|^2 \mathrm{d}\mathbf{x} \right].
    \end{align*}
    From the normalization conservation, we know $\frac{\mathrm{d}}{\mathrm{d}t}\|\phi_j\|^2 = 2 \mathrm{Re} \int \phi_j \partial_t \phi_j^* \mathrm{d}\mathbf{x} = 0$. Therefore, the first term vanishes, and we arrive at the energy diminishing property:
    \[ \frac{\mathrm{d}}{\mathrm{d}t}E(\Phi) = -2 \sum_{j=1}^2 \|\partial_t \phi_j\|^2 \le 0. \]
\end{proof}

The above energy diminishing property provides a computational stability of CNGF for computing the ground states.



To facilitate the implementation of the CNGF, we consider two gradient flow computational models based on discrete normalization projection. 
Let $\tau = \Delta t > 0$ be the time step size and $t_n = n \tau$ for $n \ge 0$. 

\subsubsection*{Gradient Flow with Discrete Normalization (GFDN)}
The GFDN method consists of two steps: first, it evolves the wave function by neglecting the normalization constraint (steepest descent of the unconstrained energy), and second, it projects the solution back onto the normalization manifold. The formulation reads
\begin{equation}\label{eq: GFDN}
    \left\{\begin{aligned}
        &\partial_t\Phi(\mathbf{x}, t)=\frac{1}{2}\boldsymbol{\varepsilon}\odot\nabla^2\Phi-\big[\mathbf{V}_d(\mathbf{x})+\mathbf{A}_d(\Phi)+\mathbf{B}_d(\Phi)\big]\odot\Phi, \quad t \in [t_n, t_{n+1}),\\
        &\Phi(\mathbf{x}, t_{n+1}) \triangleq \Phi(\mathbf{x}, t_{n+1}^+) \!=\! \left(\!\! \sqrt{\!\frac{N_1}{N}}\frac{\phi_1(\mathbf{x}, t_{n+1}^-)}{\|\phi_1(\cdot, t_{n+1}^-)\|}, \sqrt{\!\frac{N_2}{N}}\frac{\phi_2(\mathbf{x}, t_{n+1}^-)}{\|\phi_2(\cdot, t_{n+1}^-)\|} \!\right)^T\!\!, \; n\geq0,\\
        &\Phi(\mathbf{x},0)=\Phi_0(\mathbf{x}), \quad \mathbf{x}\in\mathbb{R}^d,
    \end{aligned}\right.
\end{equation}
where $\Phi(\mathbf{x}, t_{n+1}^-)$ is the solution of the first equation at $t = t_{n+1}$ obtained from the initial value $\Phi(\mathbf{x}, t_n)$.

For the linear case (i.e., without interaction terms), the GFDN ensures energy diminishing unconditionally:
\begin{theorem}[\cite{bao2004computing}]
    Suppose $V_{d,j}(\mathbf{x})\ge 0$ and the interactions vanish ($\alpha_{jl}=\beta_{j}=0$). The GFDN scheme \eqref{eq: GFDN} is normalization conserving and energy diminishing for any time step $\tau>0$ and initial data $\Phi_0$, i.e.,
    \begin{equation}
        E\left(\Phi(\cdot,t_{n+1})\right) \le E\left(\Phi(\cdot,t_{n})\right) \le \cdots \le E\left(\Phi_0\right), \quad n \ge 0.
    \end{equation}
\end{theorem}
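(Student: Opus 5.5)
When the interactions vanish, the two components decouple completely. Both the gradient flow and the energy then split as $E(\Phi)=E_1(\phi_1)+E_2(\phi_2)$, with
\[
E_j(\phi)=\int_{\mathbb{R}^d}\Big(\frac{\varepsilon_j}{2}|\nabla\phi|^2+V_{d,j}|\phi|^2\Big)\mathrm{d}\mathbf{x}=\langle \phi, H_j\phi\rangle,\qquad H_j=-\frac{\varepsilon_j}{2}\nabla^2+V_{d,j}.
\]
The operator $H_j$ is self-adjoint and nonnegative, since $V_{d,j}\ge0$. It therefore suffices to prove the claim for a single component. Normalization conservation is immediate from the projection step of \eqref{eq: GFDN}, as long as $\phi_j(\cdot,t_{n+1}^-)\neq0$. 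This holds because the linear flow is $\phi(t_{n+1}^-)=e^{-\tau H_j}\phi(t_n)$, and $e^{-\tau H_j}$ is injective.

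Fix $j$, write $H=H_j$, and set $\phi=\phi_j(\cdot,t_n)$ with $\|\phi\|^2=c:=N_j/N$. The new iterate is $\tilde\phi=\sqrt c\,e^{-\tau H}\phi/\|e^{-\tau H}\phi\|$. The goal is
\[
E_j(\tilde\phi)=c\,\frac{\langle e^{-\tau H}\phi, H e^{-\tau H}\phi\rangle}{\langle e^{-\tau H}\phi,e^{-\tau H}\phi\rangle}\le \langle\phi,H\phi\rangle=c\,\frac{\langle\phi,H\phi\rangle}{\langle\phi,\phi\rangle}.
\]
By the spectral theorem, take $\nu$ to be the spectral measure of $H$ associated with $\phi$, supported on $[0,\infty)$. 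The inequality then reads
\[
\frac{\int\lambda e^{-2\tau\lambda}\,\mathrm{d}\nu}{\int e^{-2\tau\lambda}\,\mathrm{d}\nu}\le\frac{\int\lambda\,\mathrm{d}\nu}{\int \mathrm{d}\nu}.
\]

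This is the key step. I would prove it with the Chebyshev (correlation) inequality: $\lambda$ is increasing and $e^{-2\tau\lambda}$ is decreasing, so
\[
\int\lambda e^{-2\tau\lambda}\,\mathrm{d}\nu\int \mathrm{d}\nu\le\int\lambda\,\mathrm{d}\nu\int e^{-2\tau\lambda}\,\mathrm{d}\nu.
\]
Equivalently, the double integral $\iint(\lambda-\lambda')(e^{-2\tau\lambda}-e^{-2\tau\lambda'})\,\mathrm{d}\nu(\lambda)\,\mathrm{d}\nu(\lambda')$ is nonpositive because its integrand is pointwise $\le0$. An alternative route is to differentiate $R(s)=\langle e^{-sH}\phi,He^{-sH}\phi\rangle/\|e^{-sH}\phi\|^2$ in $s$. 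This gives $R'(s)=-2\big(\|Hu\|^2\|u\|^2-\langle u,Hu\rangle^2\big)/\|u\|^4\le0$ by Cauchy--Schwarz, with $u=e^{-sH}\phi$.

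The main obstacle is technical rather than conceptual. One must justify the spectral calculus when $E(\phi)<\infty$ (so that $\phi$ lies in the form domain of $H$), and when $\phi$ lies outside the domain of $H$ itself. I would handle this with the measure formulation above, since it only requires $\int\lambda\,\mathrm{d}\nu<\infty$. Note also that $e^{-\tau H}\phi$ lies in every domain for $\tau>0$. Summing over $j=1,2$ and iterating in $n$ gives the stated chain of inequalities for every $\tau>0$.
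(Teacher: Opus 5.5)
Your proof is correct, but its main step differs from the cited argument. The paper gives no proof of its own and simply refers to Bao and Du~\cite{bao2004computing}. As far as I recall, the linear case there is handled by the differential argument you list as your alternative. Along $\partial_t\phi_j=-H_j\phi_j$ one has $\frac{\mathrm{d}}{\mathrm{d}t}\|\phi_j\|^2=-2E_j(\phi_j)$ and $\frac{\mathrm{d}}{\mathrm{d}t}E_j(\phi_j)=-2\|\partial_t\phi_j\|^2$. Hence the Rayleigh quotient $E_j(\phi_j)/\|\phi_j\|^2$, which the projection step leaves unchanged, has derivative $2\bigl(E_j(\phi_j)^2-\|\phi_j\|^2\|\partial_t\phi_j\|^2\bigr)/\|\phi_j\|^4\le 0$ by Cauchy--Schwarz.

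Your primary route instead compares only the two endpoints of $[t_n,t_{n+1}]$, using the spectral measure $\nu$ of $\phi_j(\cdot,t_n)$ and Chebyshev's correlation inequality. This gains robustness and generality:
\begin{itemize}
\item It needs only $\int\lambda\,\mathrm{d}\nu<\infty$, i.e.\ finite energy. You never have to justify differentiating at $s=0$ or the continuity of the Rayleigh quotient there.
\item It works verbatim for any positive nonincreasing filter $g(H_j)$ in place of $e^{-\tau H_j}$. For example, $g(\lambda)=(1+\tau\lambda)^{-1}$ gives unconditional energy decay for the backward Euler step of the linear GFDN. Replacing $H_j$ by the symmetric nonnegative sine-pseudospectral Hamiltonian matrix gives the same for the fully discrete BESP version.
\end{itemize}
The differential route, in turn, is more elementary: integration by parts and Cauchy--Schwarz, with no functional calculus. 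It also yields monotonicity at every $t\in[t_n,t_{n+1}]$ rather than only at the endpoints, and it parallels the CNGF energy computation in the preceding theorem.

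One small point: at $n=0$ you use $\|\phi_{0,j}\|^2=N_j/N$. The statement only tacitly assumes this normalization of $\Phi_0$, and without it the final link $E(\Phi(\cdot,t_1))\le E(\Phi_0)$ can fail. For instance, take $\Phi_0$ to be a small multiple of a normalized eigenfunction.
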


\subsubsection*{Gradient Flow with Lagrange Multiplier (GFLM)}
Despite the broad effectiveness of GFDN, standard temporal discretizations (such as the backward-forward Euler scheme) under the GFDN framework may introduce splitting errors that prevent convergence to the exact ground state \cite{faou2018convergence,liu2021normalized}. Following \cite{liu2021normalized}, we propose the GFLM method for the quantum droplet system by incorporating explicit Lagrange multiplier terms into the gradient flow evolution, reading as
\begin{equation}\label{eq: GFLM}
    \left\{\begin{aligned}
        &\partial_t\Phi = \frac{1}{2}\boldsymbol{\varepsilon}\odot\nabla^2\Phi-\big[\mathbf{V}_d(\mathbf{x})+\mathbf{A}_d(\Phi)+\mathbf{B}_d(\Phi)-\boldsymbol{\mu}({\Phi}(\cdot,t_n))\big]\odot\Phi,\;\;\; t \in [t_n, t_{n+1}),\\
        &\Phi(\mathbf{x}, t_{n+1}) \triangleq \Phi(\mathbf{x}, t_{n+1}^+) = \left(\! \sqrt{\frac{N_1}{N}}\frac{\phi_1(\mathbf{x}, t_{n+1}^-)}{\|\phi_1(\cdot, t_{n+1}^-)\|}, \sqrt{\frac{N_2}{N}}\frac{\phi_2(\mathbf{x}, t_{n+1}^-)}{\|\phi_2(\cdot, t_{n+1}^-)\|} \!\right)^T\!\!, \; n\geq0,\\   &\Phi(\mathbf{x},0)=\Phi_0(\mathbf{x}), \quad \mathbf{x}\in\mathbb{R}^d,
    \end{aligned}\right.
\end{equation}
where the chemical potential $\boldsymbol{\mu}({\Phi}(\cdot,t_n))$ is fixed at the beginning of each time interval using the formula in Eq. \eqref{eq: chemical potential def}. 
The key difference between GFDN and GFLM is the inclusion of the explicit Lagrange multiplier term $\boldsymbol{\mu}({\Phi}(\cdot,t_n))\odot\Phi$, which helps to correct the stationary state of the discrete scheme.

\subsection{Full discretization: BESP and BFSP schemes}\label{sec:Full discretization}
In this subsection, we present the full discretization of the GFDN and GFLM methods using the linearized Backward Euler Sine-Pseudospectral (BESP) and Backward-Forward Sine-Pseudospectral (BFSP) schemes. For simplicity, we illustrate the method for the 1D case ($d=1$) on a bounded domain $[a, b]$ with homogeneous Dirichlet boundary conditions. Generalizations to 2D and 3D are straightforward via tensor products.

Let $M$ be an even positive integer, and define the spatial mesh size $h = (b-a)/M$ and time step size $\tau = \Delta t > 0$. The grid points and time steps are given by $x_j = a + j h$ ($j=0,1, \ldots, M$) and $t_n = n \tau$ ($n=0,1, \ldots$), respectively. Let $\Phi_j^n$ denote the numerical approximation of $\Phi(x_j, t_n)$. We use the sine-spectral method for spatial discretization. The spectral second-order differential operator $D_{xx}^s$ acting on a vector $U = (U_0, \dots, U_M)^T$ (with $U_0=U_M=0$) is defined as
\begin{equation}
    (D_{xx}^s U)_j = -\frac{2}{M}\sum_{l=1}^{M-1} \mu_l^2 (\hat{U})_l \sin(\mu_l(x_j-a)), \quad j=1, \ldots, M-1,
\end{equation}
where $\mu_l = \frac{\pi l}{b-a}$, and $(\hat{U})_l$ are the sine transform coefficients:
\begin{equation}
    (\hat{U})_l = \sum_{j=1}^{M-1} U_j \sin(\mu_l(x_j-a)), \quad l=1, \ldots, M-1.
\end{equation}

\subsubsection*{BESP Scheme for GFLM}
The linearly implicit BESP scheme for GFLM is given by:
\begin{equation}\label{eq: BESP}
    \frac{\Phi_j^{*} - \Phi_j^n}{\tau} = \frac{1}{2}\boldsymbol{\varepsilon}\odot (D_{xx}^s\Phi^{*})_j - \big[\mathbf{V}_1(x_j) + \mathbf{A}_1(\Phi_j^n) + \mathbf{B}_1(\Phi_j^n)\big]\odot\Phi_j^{*} + \boldsymbol{\mu}({\Phi}^n)\odot\Phi_j^n,
\end{equation}
for $j=1, \ldots, M-1$, with boundary conditions $\Phi_0^* = \Phi_M^* = 0$. The intermediate solution $\Phi^*$ is then projected to satisfy the normalization:
\begin{equation}
    (\phi_{l})_j^{n+1} = \sqrt{\frac{N_l}{N}} \frac{(\phi_{l}^*)_j}{\|\phi_{l}^*\|_h}, \quad l=1,2,
\end{equation}
where $\|\cdot\|_h$ denotes the discrete $L^2$ norm.
Eq. \cref{eq: BESP} constitutes a linear system for $\Phi^*$ with spatially variable coefficients. Solving this variable-coefficient linear system directly is computationally expensive. Therefore, we solve it iteratively using a stabilization term with the parameter $\boldsymbol{\gamma}=(\gamma_1, \gamma_2)^T$:
\begin{align}\label{eq: iterative solver}
    \frac{\Phi_j^{*,m+1} - \Phi_j^n}{\tau} &= \frac{1}{2}\boldsymbol{\varepsilon}\odot (D_{xx}^s\Phi^{*,m+1})_j - \boldsymbol{\gamma}\odot\Phi_j^{*,m+1} \nonumber \\
    &\quad + \left[\boldsymbol{\gamma} - \mathbf{V}_1(x_j) - \mathbf{A}_1(\Phi_j^n) - \mathbf{B}_1(\Phi_j^n)\right]\odot\Phi_j^{*,m} + \boldsymbol{\mu}({\Phi}^n)\odot\Phi_j^n,
\end{align}
with $\Phi_j^{*,0} = \Phi_j^n$. This linear system with constant coefficients can be solved efficiently using the fast Fourier transform (FFT).

\subsubsection*{BFSP Scheme for GFLM}
Taking only one iteration step ($m=0$) in Eq. \eqref{eq: iterative solver} yields the BFSP scheme:
\begin{align}\label{eq: BFSP}
    \frac{\Phi_j^{*} - \Phi_j^n}{\tau} &= \frac{1}{2}\boldsymbol{\varepsilon}\odot (D_{xx}^s\Phi^{*})_j - \boldsymbol{\gamma}\odot\Phi_j^{*} \nonumber \\
    &\quad + \left[\boldsymbol{\gamma} - \mathbf{V}_1(x_j) - \mathbf{A}_1(\Phi_j^n) - \mathbf{B}_1(\Phi_j^n)\right]\odot\Phi_j^{n} + \boldsymbol{\mu}({\Phi}^n)\odot\Phi_j^n.
\end{align}
The projection step remains the same. The schemes for GFDN are obtained by simply removing the Lagrange multiplier term $\boldsymbol{\mu}({\Phi}^n)\odot\Phi_j^n$.

\begin{remark}
    Due to the competing nonlinear interactions in quantum droplets, determining an optimal stabilization parameter $\boldsymbol{\gamma}$ is difficult, with no theoretical guarantee for unconditional convergence for an arbitrary time step $\tau$. We adapt the heuristic strategy used in standard BECs \cite{bao2006efficient}:
    \begin{equation}
         \gamma_l=\frac{b_{\max}^l+b_{\min}^l}{2}, \quad l=1,2,
    \end{equation}
    where $b_{\max}^l$ and $b_{\min}^l$ denote the maximum and minimum values of the effective potential $V_{l}+A_{l}(\Phi^n)+B_{l}(\Phi^n)$, respectively. In practice, this choice effectively balances stability and efficiency.
\end{remark}

\begin{remark}
For radially symmetric potentials $V(\mathbf{x}) = V(|\mathbf{x}|)$ (including the free-space case), the ground state preserves radial symmetry. Consequently, the original 3D problem reduces to a 1D equation on the semi-infinite interval $[0, \infty)$. The numerical solution for this reduced system follows the standard radial discretization scheme (see, e.g., Ref.~\cite{bao2004computing}), and we thus omit the implementation details here.
\end{remark}

\subsection{Choice of initial data}
The gradient flow methods (GFDN and GFLM) are iterative processes that require an initial guess $\Phi_0(\mathbf{x})$ to start. A well-chosen initial guess is crucial for two reasons: it significantly accelerates the convergence rate, and it helps the algorithm avoid being trapped in local energy minimizers (excited states), thereby ensuring convergence to the true ground state. We provide effective initial guesses for different physical regimes as follows.

\subsubsection*{Weak-coupling regime}
In the weak-coupling regime, the ground state profile is dominated by the kinetic energy or external potential. We typically adopt a Gaussian profile as the initial guess.

\begin{itemize}
	\item \textbf{Harmonic potential.} 
When the system is confined by a harmonic trap, we choose the ground state of the corresponding non-interacting linear harmonic oscillator:
\begin{equation}
    \phi_j^{(0)}(\mathbf{x}) = \prod_{\nu} \left( \frac{\gamma_{\nu,j}}{\pi\sqrt{\varepsilon_j}} \right)^{1/4} \exp\left( -\frac{\gamma_{\nu,j} x_\nu^2}{2\sqrt{\varepsilon_j}} \right), \quad j=1,2,
\end{equation}
where the product runs over the spatial dimensions $\nu \in \{x,y,z\}$.

	\item \textbf{Free space.} 
In the absence of external potential ($V(\mathbf{x})=0$), we employ a Gaussian ansatz with a tunable width $\sigma_j$:
\begin{equation}
    \phi_j^{(0)}(\mathbf{x}) = \frac{1}{(\pi \sigma_j^2)^{d/4}} \exp\left( -\frac{|\mathbf{x}|^2}{2\sigma_j^2} \right), \quad j=1,2.
\end{equation}
The width parameters $\sigma_j > 0$ are chosen empirically (e.g., $\sigma_j \in [0.1, 5]$) or determined by a continuation technique from solutions with larger particle numbers.
\end{itemize}

\subsubsection*{Strong-coupling regime}
In the strong-coupling regime in free space ($V(\mathbf{x})=0$), we adopt the TFA. By neglecting the kinetic energy term, the ground state reduces to a uniform liquid drop with constant densities confined within a finite radius $R_d$:
\begin{equation}
    \phi_j^{s}(\mathbf{x})=
    \begin{cases}
        \sqrt{\rho_j}, & |\mathbf{x}| \le R_d,\\
        0, & |\mathbf{x}| > R_d,
    \end{cases} \quad j=1,2.
\end{equation}
The equilibrium densities $\rho_1$ and $\rho_2$ are determined by minimizing the bulk energy density:
\begin{equation}
    \sqrt{\rho_1} = -\frac{5}{6}\frac{\alpha_{11}N_1^2+2\alpha_{12}N_1N_2+\alpha_{22}N_2^2}{\delta\left(\beta_1 N_1^{4/5}+\beta_2 N_2N_1^{-1/5}\right)^{5/2}}, \quad
    \sqrt{\rho_2} = -\frac{5}{6}\frac{\alpha_{11}N_1^2+2\alpha_{12}N_1N_2+\alpha_{22}N_2^2}{\delta\left(\beta_1 N_1N_2^{-1/5}+\beta_2 N_2^{4/5}\right)^{5/2}}.
\end{equation}
Consequently, the droplet radius $R_d$ is uniquely determined by the normalization condition $\|\phi_j^s\|^2 = N_j/N$. This Thomas-Fermi (TF) profile serves as a robust initial guess for simulations with large particle numbers.

For systems with a non-vanishing potential or partial confinement, a continuation technique is often more efficient.


\begin{remark}
    The validity of the TF ansatz relies on the existence of a self-bound state, which requires the mean-field attraction to balance the LHY repulsion. This imposes a strict constraint on the particle number ratio:
    \begin{equation}
        \sum_{j,l=1}^2 \alpha_{jl}N_j N_l < 0 \implies \frac{N_1}{N_2} \in \left(\frac{-\alpha_{12}-\sqrt{D}}{\alpha_{11}}, \frac{-\alpha_{12}+\sqrt{D}}{\alpha_{11}}\right),
    \end{equation}
    where $D=\alpha_{12}^2-\alpha_{11}\alpha_{22} > 0$. Outside this range, the mixture cannot form a stable droplet, rendering the single-radius TF ansatz invalid.
\end{remark}

\subsection{Application to the density-locked model}
Analogous to the general two-component case, the ground state solution $\phi_g(\mathbf{x})$ of the density-locked model is defined as the global minimizer of the energy functional:
\begin{equation}\label{eq: ground state as the minimization of energy functional}
    E(\phi_g) = \min_{\phi \in \mathbb{S}} E(\phi), \quad \mathbb{S} = \left\{ \phi \ \Big| \ \|\phi\|^2 = 1, \ E(\phi) < \infty \right\},
\end{equation}
where the dimensionless energy functional is
\begin{equation} \label{eq: energy functional density locked}
    E(\phi) = \int_{\mathbb{R}^d} \left[ \frac{1}{2}|\nabla\phi|^2 + V_d(\mathbf{x})|\phi|^2 + \frac{\alpha_d}{2}|\phi|^4 + \frac{2\beta_d}{5}|\phi|^5 \right] \mathrm{d}\mathbf{x}.
\end{equation}
Every minimizer must satisfy the corresponding Euler-Lagrange equation, which reads as the nonlinear eigenvalue problem:
\begin{equation}\label{eq: nonlinear eigenvalue problem density locked}
    \mu\phi(\mathbf{x}) = -\frac{1}{2}\nabla^2\phi(\mathbf{x}) + V_d(\mathbf{x})\phi(\mathbf{x}) + \alpha_d|\phi(\mathbf{x})|^2\phi(\mathbf{x}) + \beta_d|\phi(\mathbf{x})|^3\phi(\mathbf{x}),
\end{equation}
under the normalization condition $\|\phi\|^2 = 1$, and the corresponding chemical potential $\mu$ is given by
\begin{equation}
    \mu = \int_{\mathbb{R}^d} \left[ \frac{1}{2}|\nabla\phi|^2 + V_d(\mathbf{x})|\phi|^2 + \alpha_d|\phi|^4 + \beta_d|\phi|^5 \right] \mathrm{d}\mathbf{x}.
\end{equation}
The GFDN and GFLM methods introduced in Sections 3.2--3.3 can be applied directly to solve this problem by reducing the vector fields to a scalar field.

Regarding the existence of ground states, we have the following result established in \cite{Luo2021DCDSB} (recall that $0>\alpha\propto N,0<\beta\propto N^{3/2}$ according to \eqref{eq:alpha-and-beta}): 
\begin{theorem}[Existence of ground state \cite{Luo2021DCDSB}]\label{Thm: Existence of ground state} For density-locked model in three dimension, the following hold:
\begin{enumerate}[(1)]
    \item  If $V(\mathbf{x})\ge0\; (\mathbf{x}\in\mathbb{R}^3)$ satisfies $\lim_{|\mathbf{x}|\to\infty}V(\mathbf{x})=\infty$, then there exists ground-state solution for \eqref{eq: ground state as the minimization of energy functional} for any atomic number $N$.
\item If $V(\mathbf{x})=0$, there is a critical atomic number $N_c$. When $N<N_c$, there is no ground state for \eqref{eq: ground state as the minimization of energy functional}, and $E(\phi)>0$ for all $\phi\in\mathbb{S}$. When $N>N_c$, there is a ground state for \eqref{eq: ground state as the minimization of energy functional}. Moreover, the ground state is (up to translation) radially symmetric and there is some real number $\theta\in\mathbb{R}$ such that $\phi=e^{\mathrm{i}\theta}|\phi|$ and $|\phi(\mathbf{x})|>0$ for all $\mathbf{x}\in\mathbb{R}^3$. 
\end{enumerate}
\end{theorem}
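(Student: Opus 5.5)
The proof is a concentration-compactness argument for the functional \eqref{eq: energy functional density locked} with $d=3$, $\alpha<0<\beta$. By the Gagliardo--Nirenberg inequality and $L^p$ interpolation, the $|\phi|^4$ term is controlled as
\[
\|\phi\|_4^4\le \|\phi\|_2^{a}\|\phi\|_5^{b}
\]
for suitable exponents with $a+b=4$ and $b<5$. Young's inequality then gives a lower bound $E(\phi)\ge \frac12\|\nabla\phi\|^2 + c\|\phi\|_5^5 - C$ on $\mathbb{S}$. Hence $E$ is bounded below and minimizing sequences are bounded in $H^1\cap L^5$, and in the trapped case also in the weighted space with $\int V|\phi|^2<\infty$.

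\emph{Part (1).} Since $V\to\infty$, the embedding of $\{\phi\in H^1:\int V|\phi|^2<\infty\}$ into $L^2$ is compact, hence also into $L^4$ by interpolation with the $L^6$ bound from Sobolev. Along a subsequence the minimizing sequence converges strongly in $L^2$ and $L^4$, so the constraint $\|\phi\|^2=1$ passes to the limit. The gradient, potential and $|\phi|^5$ terms are weakly lower semicontinuous (the last by Fatou's lemma). The $\alpha|\phi|^4$ term converges strongly. Therefore the weak limit is a minimizer, for every $N$.

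\emph{Part (2), free space.} I will set $I(N)=\inf_{\mathbb{S}}E$ and track the $N$-dependence via $\alpha\propto N$, $\beta\propto N^{3/2}$. The mass-preserving dilation $\phi_\lambda=\lambda^{3/2}\phi(\lambda x)$ with $\lambda\to0$ shows $I(N)\le 0$, so the set where a minimizer can exist is determined by the sign of $I$.

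I will then show:
\begin{enumerate}[(a)]
\item $E(\phi)>0$ for all $\phi\in\mathbb{S}$ when $N$ is small. Using $\|\phi\|_4^4\le\|\phi\|_2^{2/3}\|\phi\|_5^{10/3}\cdot(\ldots)$-type interpolation between $L^2$, $L^5$ and the gradient, the negative quartic term is dominated by the kinetic and quintic terms with a factor small in $N$.
\item $I(N)<0$ for large $N$, using a fixed trial function $\phi$ with $\frac{\alpha}{2}\|\phi\|_4^4+\frac{2\beta}{5}\|\phi\|_5^5<0$ after a suitable dilation.
\item $N\mapsto I(N)$ is nonincreasing, which follows by rewriting via the mass scaling $\psi=\sqrt N\phi$ and a rescaling argument.
\end{enumerate}
Setting $N_c=\sup\{N: I(N)=0\}$ then gives nonexistence below $N_c$: there $I=0$, which is not attained because $E>0$ on $\mathbb{S}$ (for $N<N_c$ this strict positivity still has to be argued, e.g.\ from strict monotonicity in $N$ of $E(\phi)$ at fixed profile when the quartic term is negative).

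\emph{Existence for $N>N_c$.} Here $I(N)<0$, and I will apply Lions' concentration-compactness to a minimizing sequence.
\begin{itemize}
\item Vanishing is excluded: it forces $\|\phi_n\|_4\to0$, so $\liminf E\ge0>I$.
\item Dichotomy is excluded by the strict subadditivity $I_m<I_{\theta}+I_{m-\theta}$ for mass splittings. This follows from the scaling $I_{\theta m}<\theta I_m$ for $\theta>1$ when $I_m<0$, obtained by testing $\sqrt\theta\phi$ and using that the negative quartic term scales like $\theta^2$ while the positive quintic term scales like $\theta^{5/2}$. This needs a careful comparison and is the step I expect to be the main obstacle, since the competing exponents $2$ and $5/2$ do not give strict subadditivity for free. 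I will handle it by first restricting to masses where $I<0$ and using the homogeneity of the quartic part on the near-minimizers together with a bound $\|\phi_n\|_4\ge c>0$ coming from $I<0$.
\end{itemize}
Compactness up to translation then yields a minimizer.

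\emph{Symmetry and positivity.} Since $E(|\phi|)\le E(\phi)$ with equality only if $\phi=e^{\mathrm{i}\theta}|\phi|$ (equality case of the diamagnetic inequality, valid once $|\phi|>0$), it suffices to treat nonnegative minimizers. Schwarz symmetrization does not increase the gradient term (Pólya--Szegő) and preserves all $L^p$ norms, so a symmetric minimizer exists; the equality case (Brothers--Ziemer, given the absence of flat zones from analyticity or ODE arguments) yields radial symmetry up to translation. Positivity follows from the Euler--Lagrange equation \eqref{eq: nonlinear eigenvalue problem density locked} by writing $-\frac12\Delta|\phi|+c(x)|\phi|=\mu|\phi|$ with bounded $c$ and applying the strong maximum principle (Harnack inequality).
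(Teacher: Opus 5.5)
The paper does not prove this theorem; it only quotes it from \cite{Luo2021DCDSB}, so your argument has to stand on its own. Most of it does. The coercivity bound, Part (1), the small-$N$ and large-$N$ sign arguments, the exclusion of vanishing, and the symmetry/positivity step are standard and sound. For (a), the scale-invariant inequality you want is $\|\phi\|_4^4\le C\|\phi\|_2^{4/5}\|\nabla\phi\|_2^{6/5}\|\phi\|_5^{2}$. Weighted AM--GM then gives $E\ge (c_1N^{3/5}-c_2N)\|\nabla\phi\|_2^{6/5}\|\phi\|_5^{2}$.

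The genuine gap is the exclusion of dichotomy, which carries the whole existence claim for $N>N_c$. Write $\alpha=-aN$, $\beta=bN^{3/2}$ and pass to $\psi=\sqrt N\phi$. Then $\mathcal{E}(\psi)=\frac12\|\nabla\psi\|_2^2-\frac a2\|\psi\|_4^4+\frac{2b}{5}\|\psi\|_5^5$ has fixed coefficients, and $\mathcal{I}(m)=\inf\{\mathcal{E}(\psi):\|\psi\|_2^2=m\}$. Your test function $\sqrt\theta\psi$ gives
\[
\mathcal{E}(\sqrt\theta\psi)-\theta\,\mathcal{E}(\psi)=-(\theta^2-\theta)\tfrac a2\|\psi\|_4^4+(\theta^{5/2}-\theta)\tfrac{2b}{5}\|\psi\|_5^5 .
\]
For every fixed $\psi\neq0$ this is positive once $\theta$ is large. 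Lions' lemma, however, needs $\mathcal{I}(\theta s)<\theta\,\mathcal{I}(s)$ for all $\theta>1$, since splittings with $m_1\to0$ give $\theta=m/m_1\to\infty$. Physically, raising the amplitude pushes the density away from the equilibrium value $\sqrt\rho=5a/(6b)$ and raises the energy per particle. A lower bound $\|\psi_n\|_4\ge c$ does nothing against the $\theta^{5/2}$ term, so the proposed workaround does not close the step.

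The missing idea is to increase the mass by dilation instead of amplitude. The map $\psi_\theta(\mathbf{x})=\psi(\theta^{-1/3}\mathbf{x})$ multiplies every $\int|\psi|^p$ by exactly $\theta$ but the kinetic energy only by $\theta^{1/3}$, so
\[
\mathcal{E}(\psi_\theta)=\theta\,\mathcal{E}(\psi)-\tfrac{\theta-\theta^{1/3}}{2}\|\nabla\psi\|_2^2 .
\]
This gives $\mathcal{I}(\theta s)\le\theta\,\mathcal{I}(s)$ for all $s$ and all $\theta\ge1$. The inequality is strict when $\mathcal{I}(s)<0$, because then minimizing sequences have $\liminf\|\nabla\psi_n\|_2>0$; otherwise $\|\psi_n\|_4^4\le C\|\psi_n\|_2\|\nabla\psi_n\|_2^3\to0$ and $\liminf\mathcal{E}(\psi_n)\ge0$. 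Splitting $\mathcal{I}(m)=\frac{m_1}{m}\mathcal{I}(m)+\frac{m-m_1}{m}\mathcal{I}(m)$, and treating separately the cases where $\mathcal{I}(m_1)$ or $\mathcal{I}(m-m_1)$ vanishes, then yields $\mathcal{I}(m)<\mathcal{I}(m_1)+\mathcal{I}(m-m_1)$ whenever $\mathcal{I}(m)<0$.

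This is exactly the rescaling behind your step (c). In normalized variables it reads $E_{N'}(\phi')=E_N(\phi)-\frac{1-(N/N')^{2/3}}{2}\|\nabla\phi\|_2^2$ for $\phi'=\mu^{3/2}\phi(\mu\mathbf{x})$ with $\mu=(N/N')^{1/3}$. The same identity repairs your strict-positivity argument for $N<N_c$. There you appeal to monotonicity of $E_N(\phi)$ in $N$ at a fixed profile, which is false because the quintic coefficient grows like $N^{3/2}$. Instead, if $E_N(\phi)\le0$ for some $N<N_c$, the rescaled profile gives $I(N')<0$ for $N'\in(N,N_c)$, contradicting the definition of $N_c$.
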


\begin{remark}
    The critical particle number $N_c$ represents the threshold for the existence of a self-bound ground state in free space. 
    It is important to note that analytical estimates for $N_c$ (e.g., in \cite{petrov_quantum_2015}) typically rely on a Gaussian variational ansatz (assuming a soliton-like profile). However, such approximations may deviate from the exact value as they cannot fully capture the deformation of the wave function near the transition. We will determine the precise value of $N_c$ numerically in Section~\ref{sec:Numerical results}.
\end{remark}

\noindent\textbf{Choice of initial data.} 
Similar to the full two-component model, a suitable initial guess is essential for efficient computation. We construct $\phi^{(0)}(\mathbf{x})$ based on the interaction regime:

(i) \textit{Weak-coupling regime.} We adopt a Gaussian profile. For harmonic confinement, we use the ground state of the linear oscillator $\phi^{(0)}(\mathbf{x}) = \prod_{\nu} \left(\frac{\gamma_\nu}{\pi}\right)^{1/4} \mathrm{e}^{-\frac{\gamma_\nu x_\nu^2}{2}}$. In free space ($V(\mathbf{x})=0$), we employ a Gaussian ansatz with a tunable width $\sigma$: $\phi^{(0)}(\mathbf{x}) = (\pi\sigma^2)^{-d/4} \mathrm{e}^{-|\mathbf{x}|^2/(2\sigma^2)}$.

(ii) \textit{Strong-coupling regime.} In free space, the ground state exhibits a flat-top profile. We adopt the TF ansatz:
\begin{equation}
    \phi^{s}(\mathbf{x}) = 
    \begin{cases}
        \sqrt{\rho_s}, & |\mathbf{x}| \le R_d, \\
        0, & |\mathbf{x}| > R_d.
    \end{cases}
\end{equation}
The constant amplitude is explicitly given by $\sqrt{\rho_s} = -{5\alpha}/{6\beta}$ (valid for $\alpha < 0, \beta > 0$), corresponding to the equilibrium density of the uniform system. The droplet radius $R_d$ is determined by the normalization $\|\phi^s\|^2 = 1$. For cases with confinement, the continuation technique described in Section~\ref{sec:Full discretization} is recommended.

\section{Numerical results}\label{sec:Numerical results}
In this section, we present numerical results to validate the efficiency of the proposed methods and to investigate the ground state properties of quantum droplets. We use the physical parameters of a ${}^{39}\mathrm{K}$ mixture \cite{ferioli_dynamical_2020,semeghini_self-bound_2018}: the scattering lengths are $a_{11}=69.99a_0$, $a_{12}=-53.37a_0$, and $a_{22}=34.11a_0$, where $a_0$ is the Bohr radius. 

\subsection{Comparison of numerical methods}
We first compare the performance of different discretization schemes: GFDN-BESP, GFDN-BFSP, GFLM-BESP, and GFLM-BFSP. We consider both the density-locked model and the general two-component model in $3$D.

\textbf{Example 4.1} (Density-locked model). 
We consider the density-locked model in $3$D with $V(x)=0$ and $N=10^4$. The computational domain is $[-32,32]$ with a mesh size $h=1/32$. The scaling frequency is $\omega=50$, resulting in the dimensionless parameters $\alpha=-2.5474$ and $\beta=0.3315$ according to \eqref{eq:alpha-and-beta}. The steady state is considered reached when the residual satisfies $\max_{j}|\phi_j^{n+1}-\phi_j^n|/\tau \le 10^{-12}$.
Table \ref{tab: ex4.1} lists the ground state energy $E_g$, chemical potential $\mu_g$, CPU time, and the maximum residual for different time steps $\tau$.

\begin{table}[htbp]
    \centering
    \caption{Comparison of different methods for the density-locked model (Example 4.1).}
    \label{tab: ex4.1}
    \small
    \begin{tabular}{clcccc}
    \hline
        Method & $\tau$ & CPU(s) & $E_g$ & $\mu_g$ & Residual \\ \hline
        ~ & 1 & \multicolumn{4}{c}{Diverged} \\
        ~ & 0.5 & 0.53 & -2.338e-01 & -6.678e-01 & 1.881e-12 \\ 
        GFDN-BESP & 0.1 & 0.90 & -2.338e-01 & -6.678e-01 & 1.413e-12 \\ 
        ~ & 0.05 & 1.66 & -2.338e-01 & -6.678e-01 & 1.264e-12 \\ 
        ~ & 0.01 & 6.36 & -2.338e-01 & -6.678e-01 & 1.145e-12 \\ 
        ~ & 0.005 & 12.88 & -2.338e-01 & -6.678e-01 & 1.244e-12 \\ \hline
        ~ & 0.5 & 0.09 & -2.202e-01 & -5.536e-01 & 5.440e-02 \\ 
        ~ & 0.1 & 0.31 & -2.330e-01 & -6.426e-01 & 1.835e-02 \\ 
        ~ & 0.05 & 0.58 & -2.336e-01 & -6.550e-01 & 9.871e-03 \\ 
        GFDN-BFSP & 0.01 & 2.58 & -2.338e-01 & -6.652e-01 & 2.096e-03 \\ 
        ~ & 0.005 & 5.14 & -2.338e-01 & -6.665e-01 & 1.056e-03 \\  \hline
        ~ & 0.5 & 0.37 & -2.338e-01 & -6.678e-01 & 1.766e-12 \\ 
        ~ & 0.1 & 0.66 & -2.338e-01 & -6.678e-01 & 1.227e-12 \\ 
        GFLM-BESP & 0.05 & 1.15 & -2.338e-01 & -6.678e-01 & 1.164e-12 \\ 
        ~ & 0.01 & 4.69 & -2.338e-01 & -6.678e-01 & 1.285e-12 \\ 
        ~ & 0.005 & 8.98 & -2.338e-01 & -6.678e-01 & 1.204e-12 \\ \hline
        ~ & 0.5 & 0.08 & -2.338e-01 & -6.678e-01 & 1.665e-12 \\ 
        ~ & 0.1 & 0.28 & -2.338e-01 & -6.678e-01 & 1.254e-12 \\ 
        ~ & 0.05 & 0.56 & -2.338e-01 & -6.678e-01 & 1.190e-12 \\ 
        GFLM-BFSP & 0.01 & 2.60 & -2.338e-01 & -6.678e-01 & 1.258e-12 \\ 
        ~ & 0.005 & 5.25 & -2.338e-01 & -6.678e-01 & 1.228e-12 \\ \hline
    \end{tabular}
\end{table}

\textbf{Example 4.2} (General two-component model). 
We utilize the same physical parameters but solve the general two-component eGPE. We choose $V(x)=0$ and $N=4\times10^3$, with the particle number ratio fixed at the optimal value $N_1/N_2=\sqrt{a_{22}/a_{11}}$. The dimensionless coefficients are computed as $\alpha_{11}=32.6631$, $\alpha_{12}=-24.9068$, $\alpha_{22}=15.9185$, $\beta_1=6.4981\times 10^{-4}$, $\beta_2=3.1669\times 10^{-4}$, and $\delta=1.9132\times 10^{7}$. The convergence criterion is $\frac{1}{\tau} \sum_{j=1}^2 \max_{l} |(\phi_j)_{l}^{n+1}-(\phi_j)_l^n| \le 10^{-12}$.
Table \ref{tab: ex4.2} presents the comparison results.

\begin{table}[htbp]
    \centering
    \caption{Comparison of different methods for the general two-component model (Example 4.2).}
    \label{tab: ex4.2}
    \small
    \begin{tabular}{clccccc}
    \hline
        Method & $\tau$ & CPU(s) & $E_g$ & $\mu_{g,1}$ & $\mu_{g,2}$ & Residual \\ \hline
        ~ & 0.5 & \multicolumn{5}{c}{Diverged} \\
        ~ & 0.1 & 6.71 & -4.085e-02 & -1.375e-01 & -1.081e-01 & 1.305e-12 \\ 
        GFDN-BESP & 0.05 & 11.24 & -4.085e-02 & -1.375e-01 & -1.081e-01 & 1.336e-12 \\ 
        ~ & 0.01 & 50.69 & -4.085e-02 & -1.375e-01 & -1.081e-01 & 1.374e-12 \\ 
        ~ & 0.005 & 91.81 & -4.085e-02 & -1.375e-01 & -1.081e-01 & 1.408e-12 \\ \hline
        ~ & 0.1 & 2.62 & -4.085e-02 & -1.366e-01 & -1.072e-01 & 6.445e-04 \\ 
        ~ & 0.05 & 4.62 & -4.085e-02 & -1.370e-01 & -1.076e-01 & 3.269e-04 \\ 
        GFDN-BFSP & 0.01 & 22.19 & -4.085e-02 & -1.374e-01 & -1.080e-01 & 6.613e-05 \\ 
        ~ & 0.005 & 44.30 & -4.085e-02 & -1.375e-01 & -1.080e-01 & 3.311e-05 \\ 
        ~ & 0.001 & 215.70 & -4.085e-02 & -1.375e-01 & -1.080e-01 & 6.630e-06 \\ \hline
        ~ & 0.5 & \multicolumn{4}{c}{Diverged} \\
        ~ & 0.1 & 5.77 & -4.085e-02 & -1.375e-01 & -1.081e-01 & 1.226e-12 \\ 
        GFLM-BESP & 0.05 & 9.83 & -4.085e-02 & -1.375e-01 & -1.081e-01 & 1.328e-12 \\ 
        ~ & 0.01 & 43.70 & -4.085e-02 & -1.375e-01 & -1.081e-01 & 1.362e-12 \\ 
        ~ & 0.005 & 87.42 & -4.085e-02 & -1.375e-01 & -1.081e-01 & 1.379e-12 \\ \hline
        ~ & 0.1 & 2.96 & -4.085e-02 & -1.375e-01 & -1.081e-01 & 1.368e-12 \\ 
        ~ & 0.05 & 5.19 & -4.085e-02 & -1.375e-01 & -1.081e-01 & 1.269e-12 \\ 
        GFLM-BFSP & 0.01 & 25.68 & -4.085e-02 & -1.375e-01 & -1.081e-01 & 1.313e-12 \\ 
        ~ & 0.005 & 53.30 & -4.085e-02 & -1.375e-01 & -1.081e-01 & 1.297e-12 \\ 
        ~ & 0.001 & 268.60 & -4.085e-02 & -1.375e-01 & -1.081e-01 & 1.348e-12 \\ \hline
    \end{tabular}
\end{table}

From the results in Tables \ref{tab: ex4.1} and \ref{tab: ex4.2}, we observe the following:
\begin{enumerate}[(1)]
    \item \textbf{Efficiency:} The semi-implicit BFSP schemes are significantly faster than the fully iterative BESP counterparts.
    \item \textbf{Accuracy:} Standard GFDN-BFSP suffers from $O(\tau)$ splitting errors, leading to large residuals. In contrast, GFLM-BFSP corrects this via the Lagrange multiplier, achieving spectral accuracy comparable to BESP even with large time steps.
    \item \textbf{Stability:} The fixed-point iteration in BESP schemes diverges for large $\tau$ (e.g., $\tau=1$) due to the unbounded effective potential in quantum droplets, losing the unconditional stability found in repulsive BECs.
\end{enumerate}

Consequently, GFLM-BFSP is identified as the optimal solver, combining the efficiency of explicit schemes with the accuracy of implicit ones. We adopt it for all subsequent simulations.


\subsection{Validation of the density-locked model}
In Section~\ref{sec:eGPE}, we reduced the coupled two-component eGPE to a single-component density-locked model under the assumption of a fixed density ratio. In this subsection, we numerically verify the accuracy of this reduction by comparing its ground state solutions with those of the full two-component model. The time step size is taken as $\tau=0.001$ for subsequent experiments.

We define the relative errors in the wave function and the total energy as follows:
\begin{align*}
    \mathcal{E}_{\phi} = \frac{\|\phi_{DL} - \phi_{Full}\|_2}{\|\phi_{Full}\|_2} = \frac{\sqrt{\int |\phi_{DL} - \phi_{Full}|^2 \mathrm{d}\mathbf{x}}}{\sqrt{\int |\phi_{Full}|^2 \mathrm{d}\mathbf{x}}}, \quad
    \mathcal{E}_{E} = \frac{|E_{DL} - E_{Full}|}{|E_{Full}|},
\end{align*}
where $\phi_{DL}$ is the solution of the density-locked model, and $\phi_{Full}$ is the profile of the first component $\phi_1$ from the full model (normalized to 1). The parameters for the ${}^{39}\mathrm{K}$ mixture are the same as in Example 4.1.

\textbf{Example 4.3} (Accuracy test). 
We systematically investigate the validity of the density-locked model by comparing it with the full two-component model. We vary the trapping frequency $\omega$, particle number $N$, and the interaction imbalance parameter $\Delta a$. The ground states are computed using the GFLM-BFSP method. 

\textbf{(1) Dependence on confinement strength $\omega$.} 
We fix $N=5\times10^4$ and vary the isotropic trapping frequency $\omega$ from $0$ to $2000$. As shown in Fig. \ref{fig:accuracy_test}(a), the density-locked model exhibits excellent accuracy in the strong confinement regime. The energy error $\mathcal{E}_E$ drops rapidly as $\omega$ increases and stabilizes around $10^{-3}$. The wavefunction error $\mathcal{E}_{\phi}$ initially decreases, reaching a minimum around $\omega \approx 600$, and remains below $0.7\%$ throughout the tested range. This confirms that the single-component approximation is robust under tight confinement, where the density profiles are stiff.

\textbf{(2) Dependence on particle number $N$.} 
We fix $\omega=1000$ and vary $N$ from $1\times 10^4$ to $1\times 10^5$. The results in Fig. \ref{fig:accuracy_test}(b) show a monotonic increase in errors as the droplet size grows. This trend is expected because larger droplets have a flatter density core (quantum droplet regime), making the transition region more sensitive to the ratio-locking approximation compared to Gaussian-like profiles. However, even for macroscopic droplets with $N=10^5$, the relative errors remain negligible ($\mathcal{E}_E < 0.3\%$ and $\mathcal{E}_{\phi} < 0.6\%$), validating the model for large-scale simulations.

\textbf{(3) Dependence on interaction imbalance $\Delta a$.} 
The density-locked ansatz is derived based on a specific ratio determined by the intraspecies interactions. To test the robustness of this approximation against parameter variations, we introduce a deviation $\Delta a$ relative to our baseline experimental configuration (i.e., $a_{12} = a_{12}^{\text{base}} + \Delta a$). We fix $\omega=1000, N=5\times10^4$ and vary $\Delta a$ from $0$ to $3a_0$. As illustrated in Fig. \ref{fig:accuracy_test}(c), the wavefunction error $\mathcal{E}_{\phi}$ increases almost linearly with $\Delta a$. This is physically consistent: as the interspecies interaction shifts away from the baseline value used to justify the locking condition, the true ground-state density ratio slightly drifts, leading to a growing discrepancy with the locked ansatz. Nevertheless, even at a significant deviation of $\Delta a = 3a_0$, the errors remain well-controlled ($\mathcal{E}_{\phi} < 1\%$ and $\mathcal{E}_E < 4\%$), demonstrating that the model is sufficiently robust for parameter sweeps around the base configuration.

\begin{figure}[htbp]
    \centering
    \begin{subfigure}[b]{0.8\textwidth}
        \centering
        \includegraphics[width=\textwidth]{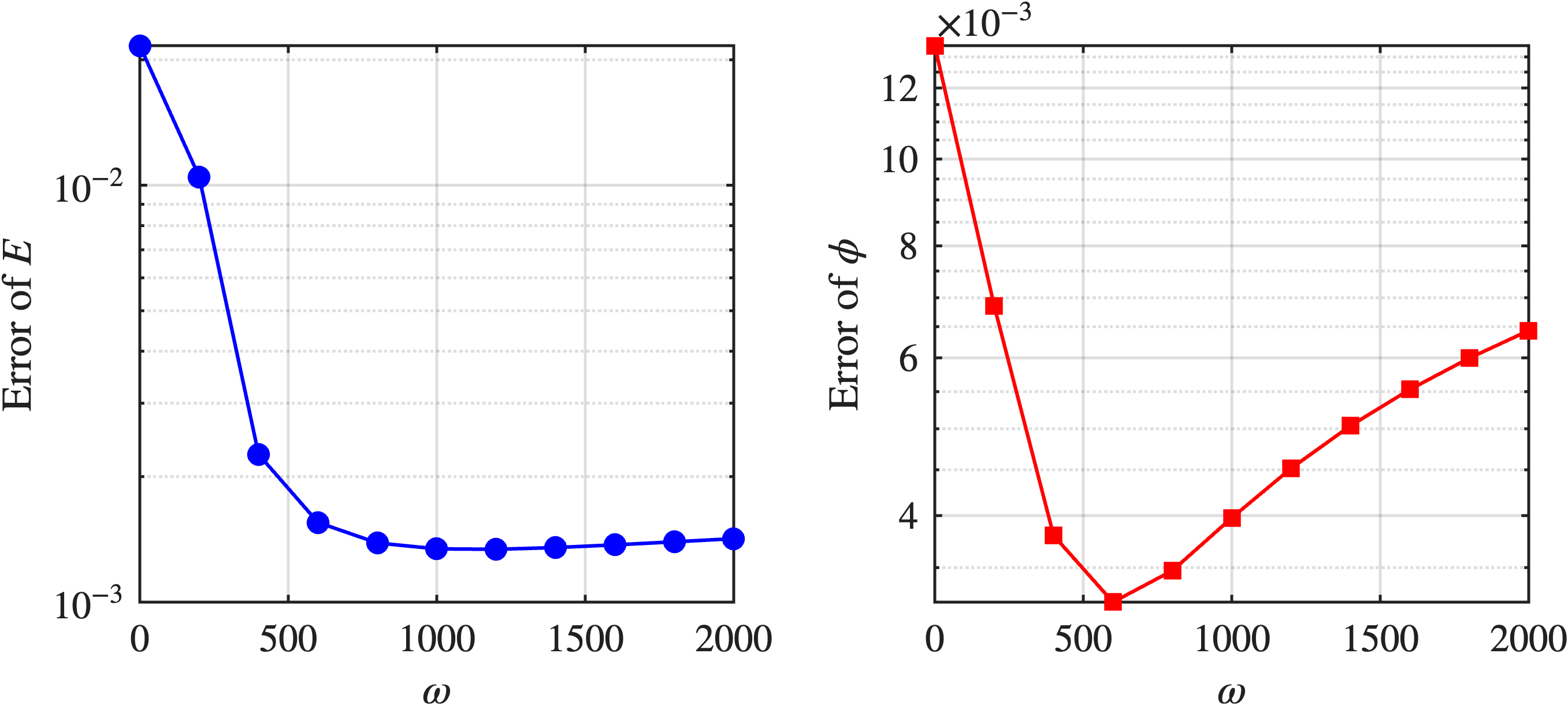}
        \caption{Error vs. confinement strength $\omega$ ($N=5\times10^4, \Delta a=0$).}
        \label{fig:err_omega}
    \end{subfigure}
    
    \vspace{1ex} 
    
    \begin{subfigure}[b]{0.8\textwidth}
        \centering
        \includegraphics[width=\textwidth]{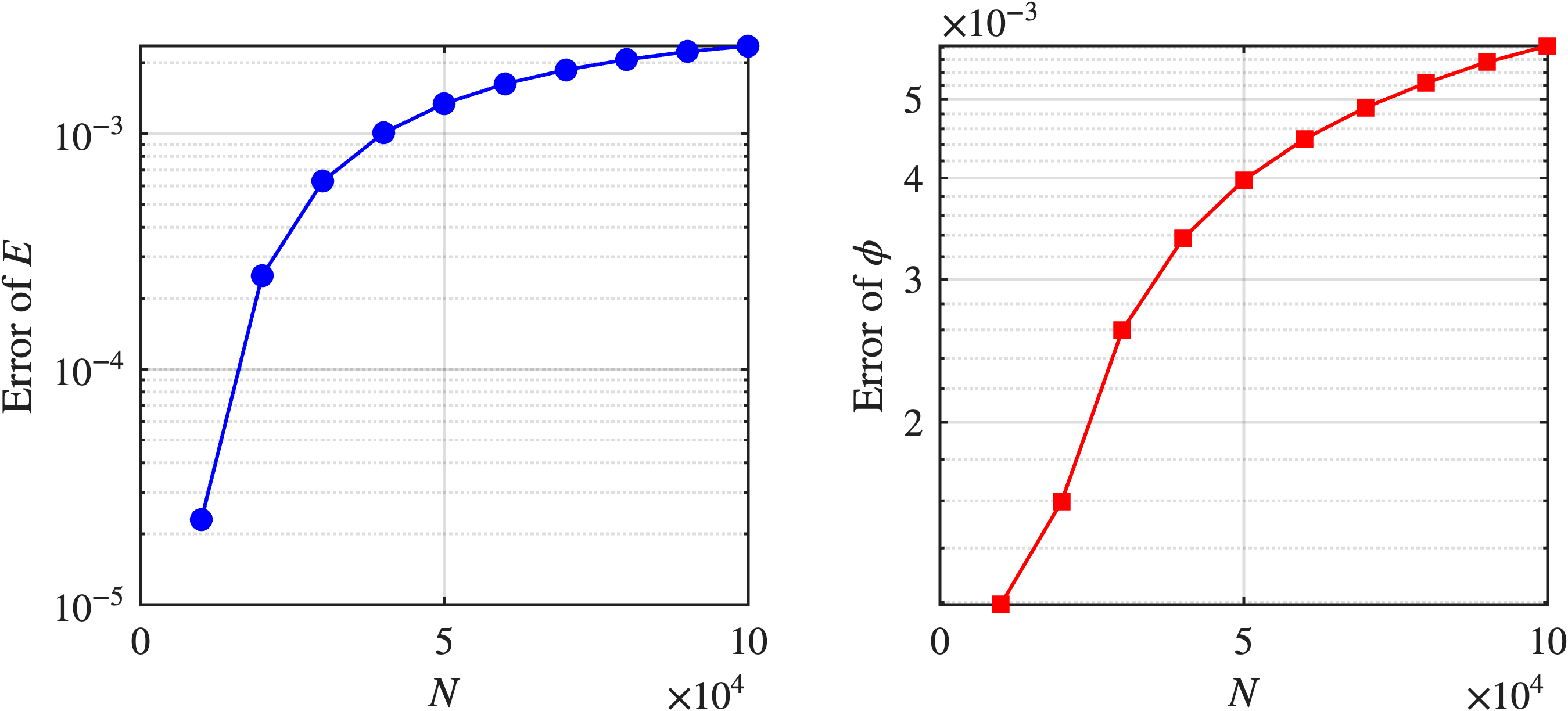}
        \caption{Error vs. particle number $N$ ($\omega=1000, \Delta a=0$).}
        \label{fig:err_N}
    \end{subfigure}
    
    \vspace{1ex}
    
    \begin{subfigure}[b]{0.8\textwidth}
        \centering
        \includegraphics[width=\textwidth]{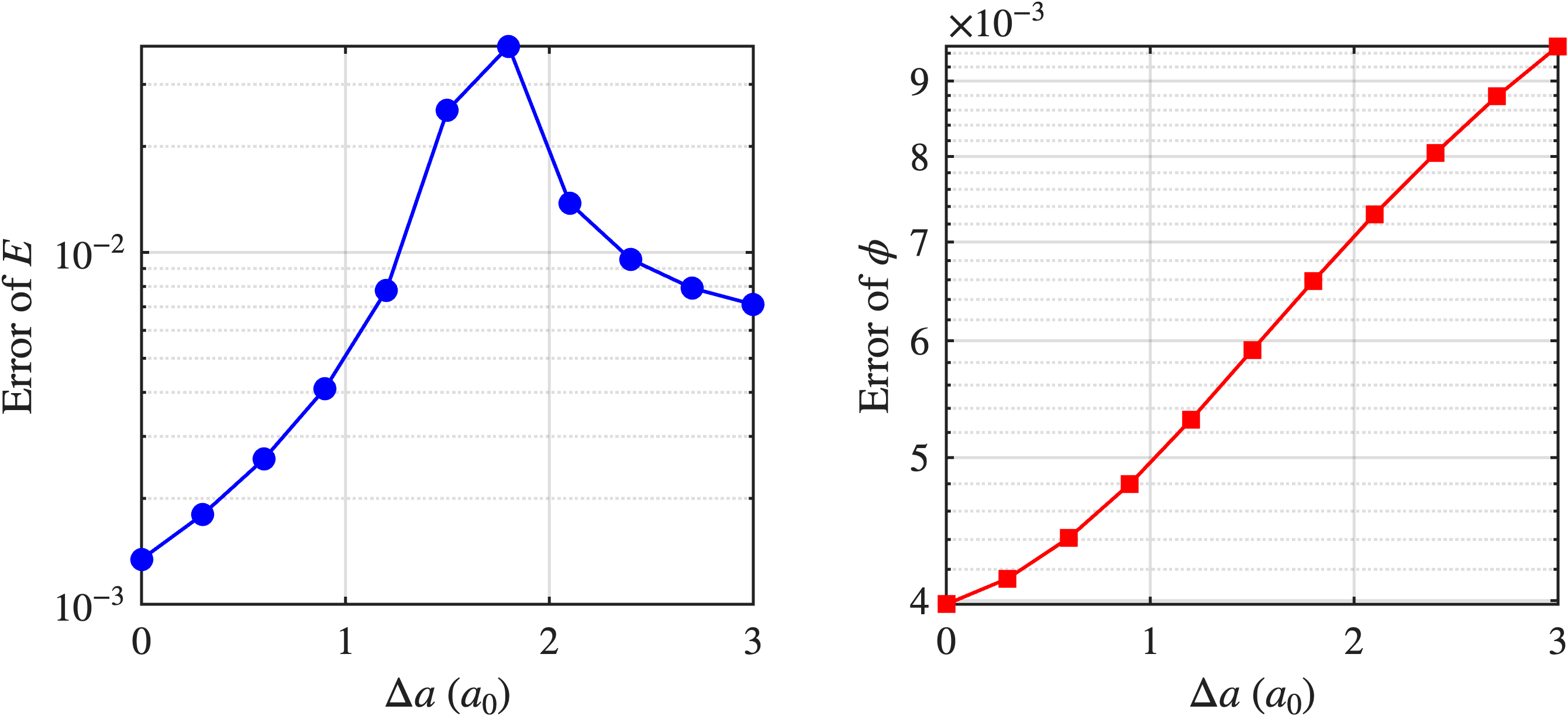}
        \caption{Error vs. interaction imbalance $\Delta a$ ($N=5\times10^4, \omega=1000$).}
        \label{fig:err_delta_a}
    \end{subfigure}
    
    \vspace{-1ex}
    \caption{Accuracy validation of the single-component density-locked model compared to the full two-component GPE. The left panels show the relative energy error $\mathcal{E}_E$, and the right panels show the relative $L^2$-norm wavefunction error $\mathcal{E}_{\phi}$. The results demonstrate that the reduced model maintains high precision ($\sim 10^{-3}$ relative error) across a wide range of experimentally relevant parameters.}
    \label{fig:accuracy_test}
\end{figure}

In summary, the density-locked model provides a reliable description of the ground state properties while significantly reducing the computational cost by halving the degrees of freedom.

\subsection{Ground state properties and validation of TFA}
In this subsection, we investigate the ground state properties of the density-locked model across different dimensions ($d=1, 2, 3$). We examine the transition from the weak-coupling to the strong-coupling regime by varying the particle number $N$. Our numerical study covers both free space ($V(\mathbf{x})=0$) and harmonic confinement, while the quantitative validation of the TFA focuses on the free-space limit.

We compute the ground state wave function $\phi_g$, the chemical potential $\mu_g$, and the root mean square (RMS) radius $r_{\mathrm{rms}}$ defined by $r_{\mathrm{rms}} = \sqrt{\int_{\mathcal{D}} |\mathbf{x}|^2 |\phi_g(\mathbf{x})|^2 \mathrm{d}\mathbf{x}}$. 
The computational domain $\mathcal{D}$ is chosen large enough to avoid boundary effects: $[-64, 64]$ for 1D, $[-32, 32]^2$ for 2D, and $[-16, 16]^3$ for 3D, with sufficiently fine meshes to ensure spatial accuracy.

\subsubsection*{Transition to the flat-top profile}
Figs. \ref{fig: 1d}, \ref{fig: 2d}, and \ref{fig: 3d} display the ground state profiles for varying particle numbers $N$ in both free space and harmonic traps. A consistent trend is observed across all dimensions:
\begin{enumerate}[(1)]
    \item For small $N$, the kinetic energy is comparable to the interaction energy, resulting in smooth, Gaussian-like profiles.
    \item As $N$ increases, the droplet expands (increasing $r_{\mathrm{rms}}$) and the peak density decreases.
    \item For sufficiently large $N$, the kinetic energy becomes negligible. In free space, the wave function develops a distinct ``flat-top" structure with a uniform bulk density matching the TFA prediction $\rho_s = -5\alpha/6\beta$. Under harmonic confinement, the profile broadens into a TF distribution determined by the local potential.
\end{enumerate}

Tables \ref{table: 1d}, \ref{table: 2d}, and \ref{table: 3d} list the quantitative properties for the free-space cases. We observe that the TFA provides an accurate estimate for $\mu_g$ only when $N$ is very large. For intermediate $N$, finite-size effects and surface tension 
lead to deviations.

\subsubsection*{Convergence rate of TFA}
To quantify the accuracy of the TFA in the limit $N \to \infty$ (free space), we calculate the errors between the numerical ground state $(\phi_g, \mu_g)$ and the TF approximation $(\phi^s, \mu^s)$. The convergence results are summarized in Tables \ref{tab: conv_1d}, \ref{tab: conv_2d}, and \ref{tab: conv_3d}.

We define the error with respect to the small parameter $N$. Based on the numerical data, we observe the following convergence laws:
\begin{itemize}
    \item Chemical Potential: The error $|\mu_g - \mu^s|$ converges at a rate of approximately $O(N^{-1/d})$.
    \item Wave Function ($L^2$ norm): The error $\|\phi_g - \phi^s\|_{L^2}$ converges at a rate of approximately $O(N^{-1/(2d)})$.
\end{itemize}
Specifically:
\begin{itemize}
    \item In 1D (Table \ref{tab: conv_1d}), the $L^2$ error decays as $O(N^{-0.5})$.
    \item In 2D (Table \ref{tab: conv_2d}), the chemical potential error scales as $O(N^{-0.5})$, and the $L^2$ error scales as $O(N^{-0.25})$.
    \item In 3D (Table \ref{tab: conv_3d}), the chemical potential error scales as $O(N^{-0.35}) \!\approx\! O(N^{-1/3})$, and the $L^2$ error scales as $O(N^{-0.17}) \approx O(N^{-1/6})$.
\end{itemize}
These scaling laws suggest that the error is dominated by the surface layer of the droplet. The convergence rates reflect how the ratio of the surface layer volume to the total droplet volume scales with $N$ in different dimensions. While these numerical findings align with physical intuition, establishing a rigorous mathematical proof for these dimension-dependent convergence rates remains an open problem for future research.


\begin{figure}[!t]
    \centering
    \begin{subfigure}[t]{0.48\linewidth}
        \centering
        \includegraphics[width=\textwidth]{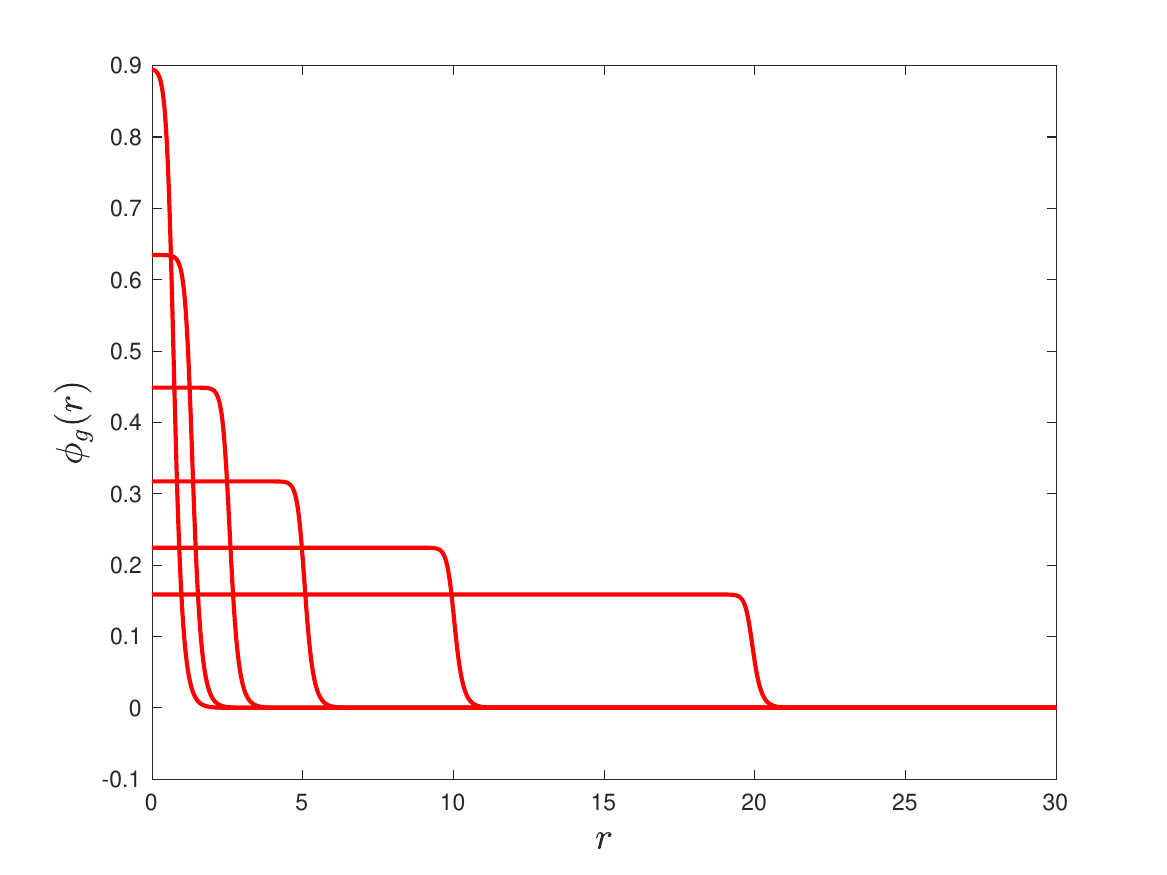}
        \caption{Free space ($V(x)=0$)}
    \end{subfigure}
    \hfill
    \begin{subfigure}[t]{0.48\linewidth}
        \centering
        \includegraphics[width=\textwidth]{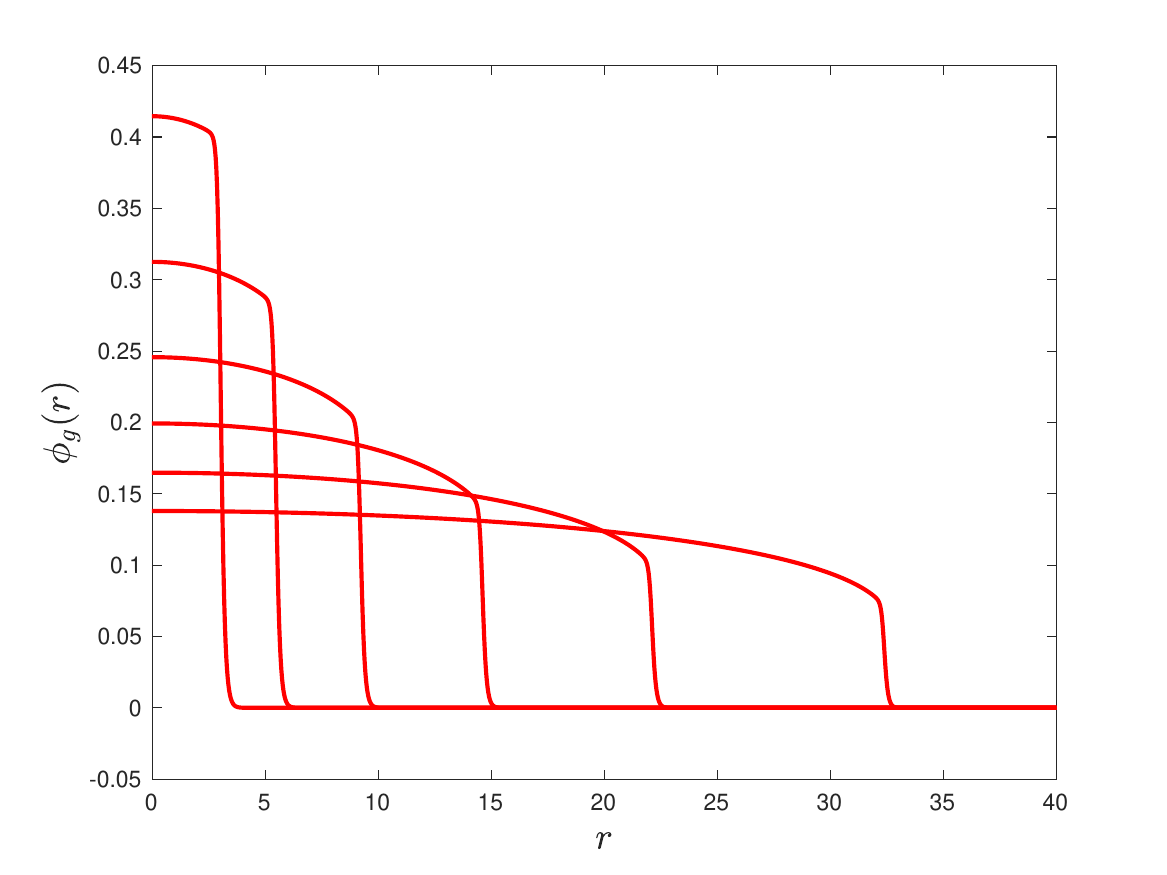}
        \caption{Harmonic potential ($V(x)=\frac{1}{2}\omega_r x^2$)}
    \end{subfigure}
    \caption{Ground state profiles of 1D density-locked droplets. (a) Wave function $\phi_g(r)$ in free space with scaling parameters $\omega=50$ and $\omega_{\perp}=100$. (b) Wave function $\phi_g(r)$ under a harmonic trap with $\omega_r=20$ and $\omega_{\perp}=2000$. In both panels, the six curves correspond to particle numbers $N$ starting from $2.5 \times 10^4$ and doubling successively up to $8 \times 10^5$. The profiles with wider spatial support correspond to larger particle numbers.}
    \label{fig: 1d}
    \vspace{-1ex}
\end{figure}

\begin{table}[!t]
    \centering
    \caption{Ground state properties of 1D quantum droplets in free space. $\phi_g(0)$ denotes the peak amplitude, $r_{\mathrm{rms}}$ is the RMS radius, and $\mu_g$ is the chemical potential computed numerically compared with the TFA.} 
    \label{table: 1d}
    \small 
    \begin{tabular}{lcccc}
    \hline
        $N$ & $\phi_{g}(0)$ & $r_{\mathrm{rms}}$ & $\mu_g$ (Num.) & $\mu_g$ (TFA) \\ \hline
        $6\times10^3$    & 1.3735 & 0.2208 & -11.4921 & -13.6051 \\ 
        $1.2\times10^4$  & 1.2138 & 0.2553 & -13.4526 & -13.6051 \\ 
        $2.4\times10^4$  & 0.9121 & 0.3923 & -13.6044 & -13.6051 \\ 
        $4.8\times10^4$  & 0.6476 & 0.7137 & -13.6051 & -13.6051 \\ 
        $9.6\times10^4$  & 0.4580 & 1.3894 & -13.6051 & -13.6051 \\ 
        $1.92\times10^5$ & 0.3238 & 2.7594 & -13.6051 & -13.6051 \\ 
        $3.84\times10^5$ & 0.2290 & 5.5089 & -13.6051 & -13.6051 \\ 
        $7.68\times10^5$ & 0.1619 & 11.0130 & -13.6051 & -13.6051 \\ \hline
    \end{tabular}
\end{table}

\begin{figure}[!t]
    \centering
    \begin{subfigure}[t]{0.48\linewidth}
        \centering
        \includegraphics[width=\textwidth]{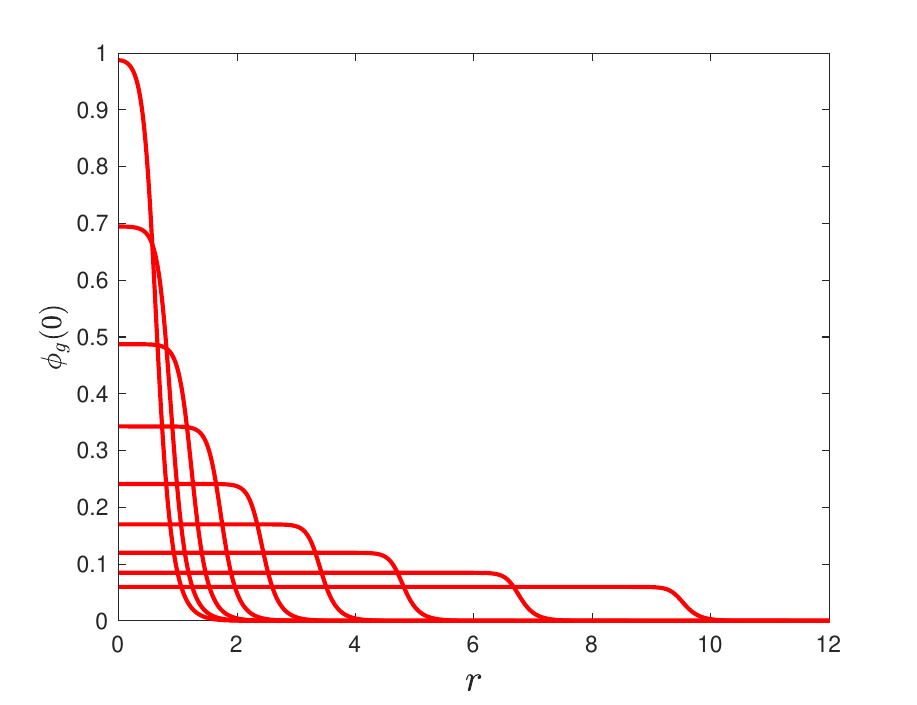}
        \caption{Free space ($V(r)=0$)}
    \end{subfigure}
    \hfill
    \begin{subfigure}[t]{0.48\linewidth}
        \centering
        \includegraphics[width=\textwidth]{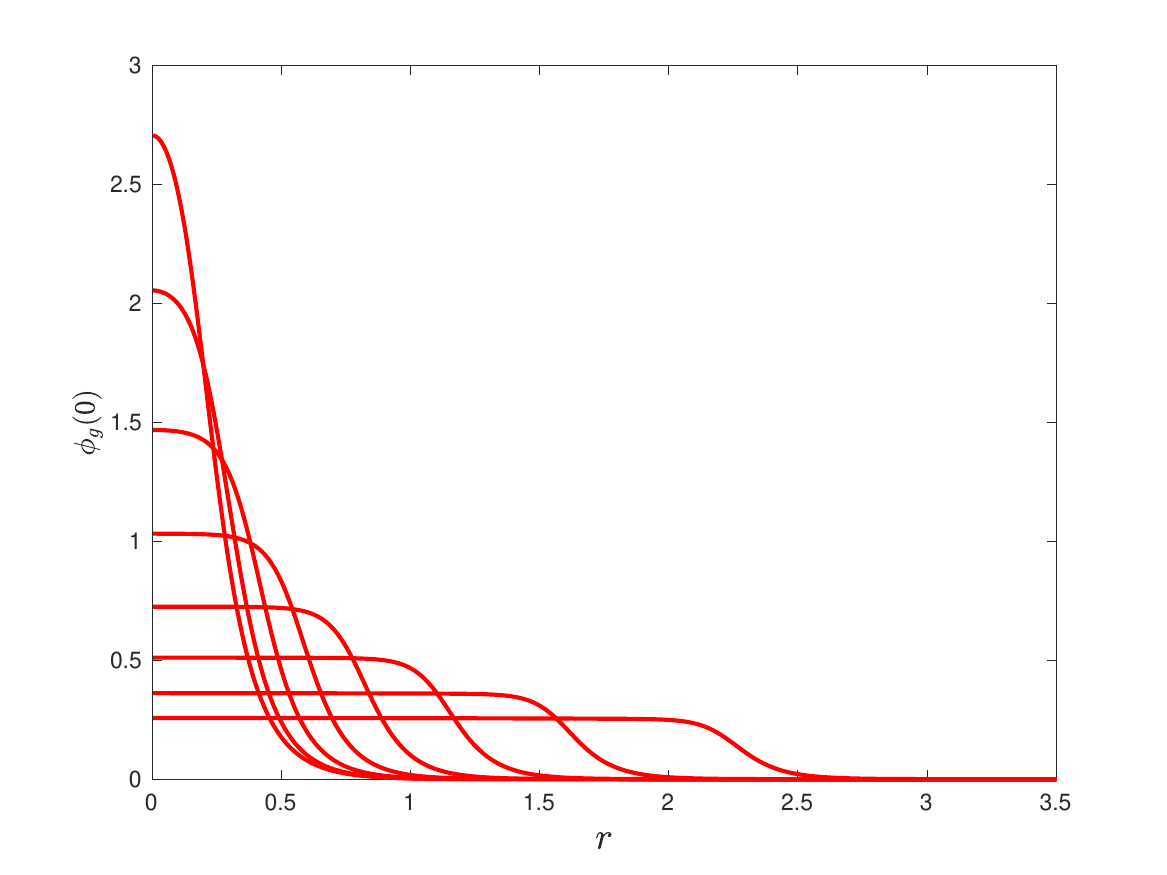}
        \caption{Harmonic potential ($V(r)=\frac{1}{2}\omega_r r^2$)}
    \end{subfigure}
    \caption{Ground state profiles of 2D radially symmetric droplets. (a) Radial wave function $\phi_g(r)$ in free space with scaling parameters $\omega=50$ and $\omega_z=100$. (b) Radial wave function $\phi_g(r)$ under a harmonic trap with $\omega_r=20$ and $\omega_z=2000$. The curves correspond to particle numbers $N$ starting from $10^5$ and doubling successively up to $2.56\times10^7$ in (a), and from $10^4$ up to $1.28\times10^6$ in (b). The profiles with wider spatial support correspond to larger particle numbers.}
    \label{fig: 2d}
\end{figure}

\begin{table}[!t]
    \centering
    \caption{Ground state properties of 2D quantum droplets in free space. The notations are the same as in Table \ref{table: 1d}.}
    \label{table: 2d}
    \small
    \begin{tabular}{lcccc}
    \hline
        $N$ & $\phi_g(0)$ & $r_{\mathrm{rms}}$ & $\mu_g$ (Num.) & $\mu_g$ (TFA) \\ 
    \hline
        $1\times10^5$    & 0.9874 & 0.4640 & -13.1268 & -15.3924 \\ 
        $2\times10^5$    & 0.6946 & 0.6190 & -13.8162 & -15.3924 \\ 
        $4\times10^5$    & 0.4873 & 0.8503 & -14.2864 & -15.3924 \\ 
        $8\times10^5$    & 0.3425 & 1.1868 & -14.6134 & -15.3924 \\ 
        $16\times10^5$   & 0.2411 & 1.6696 & -14.8426 & -15.3924 \\ 
        $32\times10^5$   & 0.1699 & 2.3575 & -15.0040 & -15.3924 \\ 
        $64\times10^5$   & 0.1199 & 3.3341 & -15.1179 & -15.3924 \\ 
        $128\times10^5$  & 0.0846 & 4.7181 & -15.1984 & -15.3924 \\ 
        $256\times10^5$  & 0.0598 & 6.6772 & -15.2552 & -15.3924 \\ \hline
    \end{tabular}
\end{table}

\begin{figure}[!t]
    \centering
    \begin{subfigure}[t]{0.48\linewidth}
        \centering
        \includegraphics[width=\textwidth]{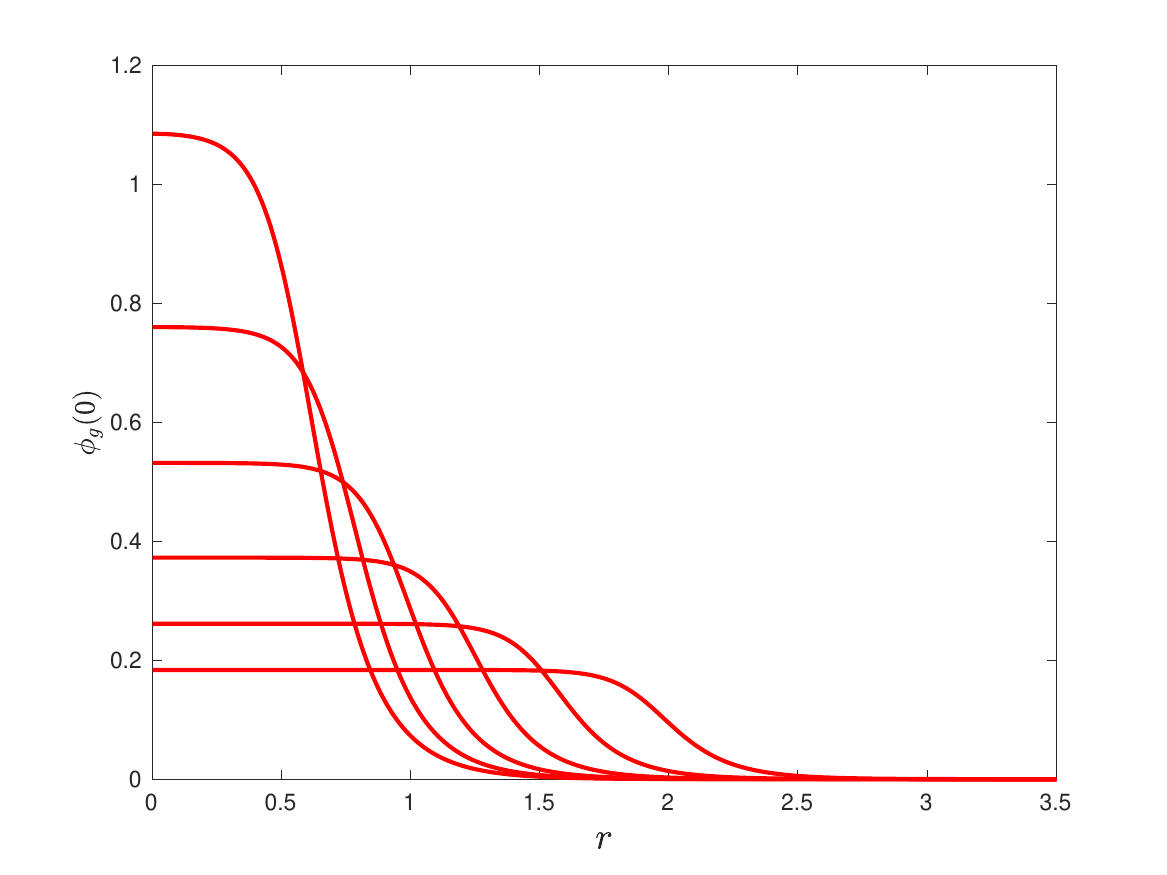}
        \caption{Free space ($V(r)=0$)}
    \end{subfigure}
    \hfill
    \begin{subfigure}[t]{0.48\linewidth}
        \centering
        \includegraphics[width=\textwidth]{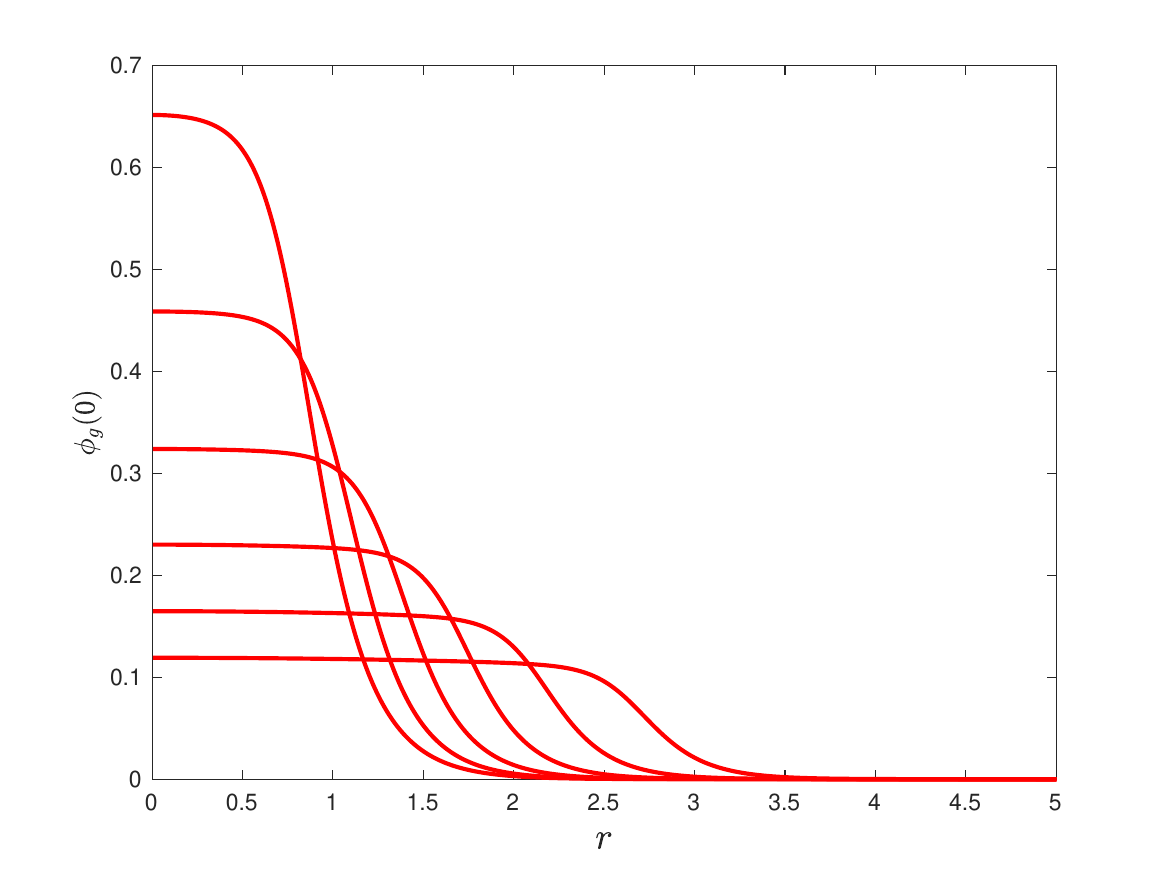}
        \caption{Harmonic potential ($V(r)=\frac{1}{2}\omega_r r^2$)}
    \end{subfigure}
    \caption{Ground state profiles of 3D spherically symmetric droplets. (a) Radial wave function $\phi_g(r)$ in free space with scaling parameter $\omega=50$. (b) Radial wave function $\phi_g(r)$ under a harmonic trap with $\omega_r=100$. In both panels, the curves correspond to particle numbers $N$ starting from $4\times10^5$ and doubling successively up to $1.28\times 10^7$. The profiles with wider spatial support correspond to larger particle numbers.}
    \label{fig: 3d}
\end{figure}

\begin{table}[!t]
    \centering
    \caption{Ground state properties of 3D quantum droplets in free space. The notations are the same as in Table \ref{table: 1d}.}
    \label{table: 3d}
    \small
    \begin{tabular}{lcccc}
    \hline
        $N$ & $\phi_g(0)$ & $r_{\mathrm{rms}}$ & $\mu_g$ (Num.) & $\mu_g$ (TFA) \\ \hline
        $1\times10^5$    & 2.1070 & 0.4047 & -8.6447  & -17.4145 \\ 
        $2\times10^5$    & 1.5357 & 0.4511 & -10.7828 & -17.4145 \\ 
        $4\times10^5$    & 1.0854 & 0.5285 & -12.3262 & -17.4145 \\ 
        $8\times10^5$    & 0.7603 & 0.6369 & -13.4783 & -17.4145 \\ 
        $1.6\times10^6$  & 0.5319 & 0.7807 & -14.3525 & -17.4145 \\ 
        $3.2\times10^6$  & 0.3726 & 0.9675 & -15.0227 & -17.4145 \\ 
        $6.4\times10^6$  & 0.2614 & 1.2073 & -15.5402 & -17.4145 \\ 
        $1.28\times10^7$ & 0.1837 & 1.5128 & -15.9419 & -17.4145 \\ 
        $10^8$           & 0.0649 & 2.9888 & -16.6868 & -17.4145 \\ \hline
    \end{tabular}
\end{table}


\begin{table}[!t]
    \centering
    \caption{Convergence of the TFA in 1D as $N \to \infty$.}
    \label{tab: conv_1d}
    \resizebox{\textwidth}{!}{
    \begin{tabular}{llllllll}
    \hline
       $N$ & 6E3 & 1.2E4 & 2.4E4 & 4.8E4 & 9.6E4 & Rate \\ \hline
       $\|\phi_g-\phi^s\|_{L^2}$ & 7.492e-01 & 5.082e-01 & 3.580e-01 & 2.547e-01 & 1.823e-01 & $\approx 0.49$ \\ \hline
    \end{tabular}}
\end{table}

\begin{table}[!t]
    \centering
    \caption{Convergence of the TFA in 2D as $N \to \infty$.}
    \label{tab: conv_2d}
    \resizebox{\textwidth}{!}{
    \begin{tabular}{llllllll}
    \hline
       $N$ & 1E5 & 2E5 & 4E5 & 8E5 & 16E5 & 32E5 & Rate \\ \hline
       $|\mu_g-\mu^s|$  & 2.266 & 1.576 & 1.106 & 7.792e-1 & 5.499e-1 & 3.885e-1 & $\approx 0.50$ \\ 
       $\|\phi_g-\phi^s\|_{L^2}$ & 4.623e-1 & 3.900e-1 & 3.268e-1 & 2.735e-1 & 2.312e-1 & 1.945e-1 & $\approx 0.25$ \\ \hline
    \end{tabular}}
\end{table}

\begin{table}[!t]
    \centering
    \caption{Convergence of the TFA in 3D as $N \to \infty$.}
    \label{tab: conv_3d}
    \resizebox{\textwidth}{!}{
    \begin{tabular}{llllllll}
    \hline
       $N$ & 1E5 & 2E5 & 4E5 & 8E5 & 16E5 & 32E5 & Rate \\ \hline
       $|\mu_g-\mu^s|$  & 8.771 & 6.633 & 5.089 & 3.937 & 3.063 & 2.392 & $\approx 0.35$ \\ 
       $\|\phi_g-\phi^s\|_{L^2}$ & 6.929e-1 & 6.000e-1 & 5.297e-1 & 4.691e-1 & 4.154e-1 & 3.697e-1 & $\approx 0.17$ \\ \hline
    \end{tabular}}
\end{table}

\subsection{The critical particle number}
As established in Theorem \ref{Thm: Existence of ground state}, a critical particle number $N_c$ exists for the formation of a self-bound droplet in free space. When $N < N_c$, the attractive mean-field energy is insufficient to bind the atoms against the LHY repulsion and kinetic pressure, resulting in the non-existence of a ground state. While theoretical estimates exist based on variational approximations (e.g., using Gaussian ansatz) \cite{petrov_quantum_2015}, the exact value of $N_c$ remains to be determined numerically. In this subsection, we provide a precise determination of $N_c$ using the GFLM-BFSP method.

To find the universal critical threshold independent of specific scattering lengths, we introduce a rescaled particle number $\widetilde{N}$. By choosing the scaling frequency as 
in \eqref{eq:omega-free-space}, 
the dimensionless coefficients become $\alpha = -3\widetilde{N}$ and $\beta = \frac{5}{2}\widetilde{N}^{3/2}$, where
\begin{equation} \label{eq: N_tilde_def}
    \widetilde{N} = \frac{5\pi^2}{3\sqrt{6}} \frac{|\delta a|^{5/2}}{(\sqrt{a_{11}}+\sqrt{a_{22}})^5} N.
\end{equation}
Thus, the problem is reduced to finding a critical parameter $\widetilde{N}_c$ such that a negative energy ground state exists if and only if $\widetilde{N} > \widetilde{N}_c$.

We perform numerical simulations by varying $\widetilde{N}$ (via $N$) and monitoring the final energy $E_{\text{fin}}$ at the termination of the algorithm. To avoid convergence to spurious positive-energy excited states (which would incorrectly imply the non-existence of a bound state), we employ a numerical continuation strategy: starting from a stable large-$\widetilde{N}$ solution, we progressively decrease the particle number using the previous wave function as the initial guess. This ensures the solver remains within the basin of attraction of the true ground state (as long as it exists). The transition is thus accurately identified where the computed $E_{\text{fin}}$ crosses from negative to positive. Table \ref{table: critical atomic number} presents the results near the transition point.

\begin{table}[!t]
    \centering
    \caption{Final energy $E_{\text{fin}}$ near the critical parameter $\widetilde{N}_c$. The transition from stable droplet ($E_{\text{fin}} < 0$) to instability occurs around $\widetilde{N} \approx 22.65$.}
    \label{table: critical atomic number}
    \small
    \begin{tabular}{ccccccc}
    \hline
       $\widetilde{N}$ & $22.60$ & $22.62$ & $22.64$ & $22.66$ & $22.68$ & $22.70$ \\ \hline
       $E_{\text{fin}}$ & 2.667e-4 & 1.498e-4 & 3.280e-5 & -8.420e-5  & -2.010e-4 & -3.179e-4 \\ \hline
    \end{tabular}
\end{table}
Our numerical results indicate that the critical threshold is $\widetilde{N}_c \approx 22.65$. This value is significantly higher than the analytical estimate $\widetilde{N}_c \approx 18.65$ provided by Petrov \cite{petrov_quantum_2015}, which was derived under a Gaussian variational approximation. Our result suggests that the Gaussian ansatz underestimates the critical number because it fails to capture the flat-top profile of the droplet near the transition.

Substituting $\widetilde{N}_c \approx 22.65$ into Eq. \eqref{eq: N_tilde_def} yields the physical critical particle number:
\begin{equation}\label{critical number with scattering length}
    N_c \approx 3.373 \times \frac{(\sqrt{a_{11}}+\sqrt{a_{22}})^5}{|\delta a|^{5/2}}. 
\end{equation}
This relationship provides a precise criterion for the formation of quantum droplets in homonuclear Bose-Bose mixtures.

\section{Summary}\label{sec:Summary}
In this paper, we have systematically studied the ground state computation of quantum droplets in homonuclear Bose-Bose mixtures. We derived the dimensionless energy functionals for both the general two-component system and the reduced density-locked model, providing a clear theoretical framework for numerical simulations.
To compute the ground states efficiently, we adapted and compared several gradient flow discretization schemes. Our extensive numerical benchmarks demonstrate that the GFLM-BFSP method achieves the best balance between efficiency and accuracy. 
Based on this efficient solver, we conducted a series of numerical experiments to explore the physical properties of quantum droplets. We verified the validity of the density-locked model, showing it to be an accurate approximation for calculating ground state properties. We also provided a quantitative analysis of the Thomas-Fermi approximation, establishing its dimension-dependent convergence rates in the strong-coupling regime. Furthermore, we numerically determined the critical particle number for self-binding in free space, offering a precise correction to existing analytical estimates.
This work provides a robust basis for future studies. Natural extensions include the simulation of droplet dynamics, the investigation of rotational properties and vortex formation, and the study of droplets in heteronuclear mixtures.

\section*{Acknowledgments}
We would like to thank Prof. Weizhu Bao for helpful discussions. 
We thank Westlake University HPC Center for computation support. 

\bibliographystyle{siamplain}
\bibliography{references}

@article{ALT2017JCP,
  author  = {Xavier Antoine and Antoine Levitt and Qinglin Tang},
  title   = {Efficient spectral computation of the stationary states of rotating {B}ose-{E}instein condensates by preconditioned nonlinear conjugate gradient methods},
  journal = {J. Comput. Phys.},
  volume  = {343},
  pages   = {92-109},
  year    = {2017},
  issn    = {0021-9991}
}

@article{BC2013KRM,
  author  = {W. Bao and Y. Cai},
  title   = {Mathematical theory and numerical methods for {B}ose-{E}instein condensation},
  journal = {Kinet. Relat. Mod.},
  year    = {2013},
  volume  = {6},
  number  = {1},
  pages   = {1-135}
}

@article{Cances10,
  title   = {Numerical Analysis of Nonlinear Eigenvalue Problems},
  author  = {Canc\`{e}s, E. and Chakir, R. and Maday, Y.},
  journal = {J. Sci. Comput.},
  volume  = {45},
  pages   = {90-117},
  year    = {2010}
}

@article{CST2000PRE,
  title   = {Ground state of trapped interacting {B}ose-{E}instein condensates by an explicit imaginary-time algorithm},
  author  = {M. L. Chiofalo and S. Succi and M. P. Tosi},
  journal = {Phys. Rev. E},
  year    = {2000},
  volume  = {62},
  number  = {5},
  pages   = {7438-7444}
}

@article{CDLX2023JCP,
title = {Second-order flows for computing the ground states of rotating {Bose-Einstein} condensates},
journal = {J. Comput. Phys.},
volume = {475},
pages = {111872},
year = {2023},
author = {Haifan Chen and Guozhi Dong and Wei Liu and Ziqing Xie}
}

@article{DP2017SISC,
  author  = {Ionut Danaila and Bartosz Protas},
  title   = {Computation of Ground States of the {G}ross-{P}itaevskii Functional via {R}iemannian Optimization},
  journal = {SIAM J. Sci. Comput.},
  volume  = {39},
  number  = {6},
  pages   = {B1102-B1129},
  year    = {2017}
}

@article{HJ2025SIREV,
author = {Henning, Patrick and Jarlebring, Elias},
title = {The {Gross--Pitaevskii} Equation and Eigenvector Nonlinearities: Numerical Methods and Algorithms},
journal = {SIAM Rev.},
volume = {67},
number = {2},
pages = {256-317},
year = {2025}
}

@article{ancilotto_self-bound_2018,
  title = {Self-bound ultradilute {Bose} mixtures within local density approximation},
  volume = {98},
  number = {5},
  journal = {Phys. Rev. A},
  author = {Ancilotto, Francesco and Barranco, Manuel and Guilleumas, Montserrat and Pi, Martí},
  month = nov,
  year = {2018},
  pages = {053623}
}

@article{minardi_effective_2019,
  title = {Effective expression of the {Lee}-{Huang}-{Yang} energy functional for heteronuclear mixtures},
  volume = {100},
  number = {6},
  journal = {Phys. Rev. A},
  author = {Minardi, F. and Ancilotto, F. and Burchianti, A. and D'Errico, C. and Fort, C. and Modugno, M.},
  month = dec,
  year = {2019},
  pages = {063636}
}

@article{semeghini_self-bound_2018,
  title = {Self-{Bound} {Quantum} {Droplets} of {Atomic} {Mixtures} in {Free} {Space}},
  volume = {120},
  number = {23},
  journal = {Phys. Rev. Lett.},
  author = {Semeghini, G. and Ferioli, G. and Masi, L. and Mazzinghi, C. and Wolswijk, L. and Minardi, F. and Modugno, M. and Modugno, G. and Inguscio, M. and Fattori, M.},
  month = jun,
  year = {2018},
  pages = {235301}
}

@article{derrico_observation_2019,
  title = {Observation of quantum droplets in a heteronuclear bosonic mixture},
  volume = {1},
  number = {3},
  journal = {Phys. Rev. Research},
  author = {D'Errico, C. and Burchianti, A. and Prevedelli, M. and Salasnich, L. and Ancilotto, F. and Modugno, M. and Minardi, F. and Fort, C.},
  month = dec,
  year = {2019},
  pages = {033155}
}

@article{ferioli_dynamical_2020,
  title = {Dynamical formation of quantum droplets in a ${}^{39}\mathrm{K}$ mixture},
  volume = {2},
  number = {1},
  journal = {Phys. Rev. Research},
  author = {Ferioli, G. and Semeghini, G. and Terradas-Briansó, S. and Masi, L. and Fattori, M. and Modugno, M.},
  month = mar,
  year = {2020},
  pages = {013269}
}

@article{cheiney_bright_2018,
  title = {Bright {Soliton} to {Quantum} {Droplet} {Transition} in a {Mixture} of {Bose}-{Einstein} {Condensates}},
  volume = {120},
  number = {13},
  journal = {Phys. Rev. Lett.},
  author = {Cheiney, P. and Cabrera, C. R. and Sanz, J. and Naylor, B. and Tanzi, L. and Tarruell, L.},
  month = mar,
  year = {2018},
  pages = {135301}
}

@article{flynn_quantum_2023,
  title = {Quantum droplets in imbalanced atomic mixtures},
  volume = {5},
  number = {3},
  journal = {Phys. Rev. Research},
  author = {Flynn, T. A. and Parisi, L. and Billam, T. P. and Parker, N. G.},
  month = sep,
  year = {2023},
  pages = {033167}
}

@article{cappellaro_collective_2018,
  title = {Collective modes across the soliton-droplet crossover in binary {Bose} mixtures},
  volume = {97},
  number = {5},
  journal = {Phys. Rev. A},
  author = {Cappellaro, Alberto and Macrì, Tommaso and Salasnich, Luca},
  month = may,
  year = {2018},
  pages = {053623}
}

@article{shamriz_suppression_2020,
  title = {Suppression of the quasi-two-dimensional quantum collapse in the attraction field by the {Lee}-{Huang}-{Yang} effect},
  volume = {101},
  number = {6},
  journal = {Phys. Rev. A},
  author = {Shamriz, Elad and Chen, Zhaopin and Malomed, Boris A.},
  month = jun,
  year = {2020},
  pages = {063628}
}

@article{petrov_ultradilute_2016,
  title = {Ultradilute Low-Dimensional Liquids},
  volume = {117},
  number = {10},
  journal = {Phys. Rev. Lett.},
  author = {Petrov, D. S. and Astrakharchik, G. E.},
  year = {2016},
  pages = {100401}
}

@article{petrov_quantum_2015,
  title = {Quantum Mechanical Stabilization of a Collapsing {Bose}-{Bose} Mixture},
  volume = {115},
  number = {15},
  journal = {Phys. Rev. Lett.},
  author = {Petrov, D. S.},
  month = oct,
  year = {2015},
  pages = {155302}
}

@article{ferrier-barbut_observation_2016,
  title = {Observation of {Quantum} {Droplets} in a {Strongly} {Dipolar} {Bose} {Gas}},
  volume = {116},
  number = {21},
  journal = {Phys. Rev. Lett.},
  author = {Ferrier-Barbut, Igor and Kadau, Holger and Schmitt, Matthias and Wenzel, Matthias and Pfau, Tilman},
  month = may,
  year = {2016},
  pages = {215301}
}

@article{bao2006efficient,
  title={Efficient and spectrally accurate numerical methods for computing ground and first excited states in {Bose}--{Einstein} condensates},
  author={Bao, Weizhu and Chern, I-Liang and Lim, Fong Yin},
  journal={J. Comput. Phys.},
  volume={219},
  number={2},
  pages={836--854},
  year={2006}
}

@article{liu2021normalized,
  title={Normalized Gradient Flow with Lagrange Multiplier for Computing Ground States of {Bose}--{Einstein} Condensates},
  author={Liu, Wei and Cai, Yongyong},
  journal={SIAM J. Sci. Comput.},
  volume={43},
  number={1},
  pages={B219--B242},
  year={2021}
}

@article{cabrera2018quantum,
  title={Quantum liquid droplets in a mixture of {Bose}-{Einstein} condensates},
  author={Cabrera, CR and Tanzi, L and Sanz, J and Naylor, B and Thomas, P and Cheiney, P and Tarruell, L},
  journal={Science},
  volume={359},
  number={6373},
  pages={301--304},
  year={2018}
}

@article{faou2018convergence,
  title={Convergence of a normalized gradient algorithm for computing ground states},
  author={Faou, Erwan and J{\'e}z{\'e}quel, Tiphaine},
  journal={IMA J. Numer. Anal.},
  volume={38},
  number={1},
  pages={360--376},
  year={2018}
}

@article{kadau2016observing,
  title={Observing the {Rosensweig} instability of a quantum ferrofluid},
  author={Kadau, Holger and Schmitt, Matthias and Wenzel, Matthias and Wink, Clarissa and Maier, Thomas and Ferrier-Barbut, Igor and Pfau, Tilman},
  journal={Nature},
  volume={530},
  number={7589},
  pages={194--197},
  year={2016}
}

@article{schmitt2016self,
  title={Self-bound droplets of a dilute magnetic quantum liquid},
  author={Schmitt, Matthias and Wenzel, Matthias and B{\"o}ttcher, Fabian and Ferrier-Barbut, Igor and Pfau, Tilman},
  journal={Nature},
  volume={539},
  number={7628},
  pages={259--262},
  year={2016}
}

@article{chomaz2016quantum,
  title={Quantum-fluctuation-driven crossover from a dilute {Bose}-{Einstein} condensate to a macrodroplet in a dipolar quantum fluid},
  author={Chomaz, L and Baier, S and Petter, D and Mark, MJ and W{\"a}chtler, F and Santos, Luis and Ferlaino, F},
  journal={Phys. Rev. X},
  volume={6},
  number={4},
  pages={041039},
  year={2016}
}

@article{ferrier2016liquid,
  title={Liquid quantum droplets of ultracold magnetic atoms},
  author={Ferrier-Barbut, Igor and Schmitt, Matthias and Wenzel, Matthias and Kadau, Holger and Pfau, Tilman},
  journal={J. Phys. B: At. Mol. Opt. Phys.},
  volume={49},
  number={21},
  pages={214004},
  year={2016}
}

@article{wenzel2017striped,
  title={Striped states in a many-body system of tilted dipoles},
  author={Wenzel, Matthias and B{\"o}ttcher, Fabian and Langen, Tim and Ferrier-Barbut, Igor and Pfau, Tilman},
  journal={Phys. Rev. A},
  volume={96},
  number={5},
  pages={053630},
  year={2017}
}

@article{guo2021lee,
  title={{Lee}-{Huang}-{Yang} effects in the ultracold mixture of ${}^{23}\mathrm{Na}$ and ${}^{87}\mathrm{Rb}$ with attractive interspecies interactions},
  author={Guo, Zhichao and Jia, Fan and Li, Lintao and Ma, Yinfeng and Hutson, Jeremy M and Cui, Xiaoling and Wang, Dajun},
  journal={Phys. Rev. Research},
  volume={3},
  number={3},
  pages={033247},
  year={2021}
}

@article{baillie2017collective,
  title={Collective excitations of self-bound droplets of a dipolar quantum fluid},
  author={Baillie, D and Wilson, RM and Blakie, PB},
  journal={Phys. Rev. Lett.},
  volume={119},
  number={25},
  pages={255302},
  year={2017}
}

@article{tylutki2020collective,
  title={Collective excitations of a one-dimensional quantum droplet},
  author={Tylutki, Marek and Astrakharchik, Grigori E and Malomed, Boris A and Petrov, Dmitry S},
  journal={Phys. Rev. A},
  volume={101},
  number={5},
  pages={051601},
  year={2020}
}

@article{bao2004computing,
  title={Computing the ground state solution of {Bose}--{Einstein} condensates by a normalized gradient flow},
  author={Bao, Weizhu and Du, Qiang},
  journal={SIAM J. Sci. Comput.},
  volume={25},
  number={5},
  pages={1674--1697},
  year={2004}
}

@article{Luo2021DCDSB,
  title = {On 3d dipolar {Bose}-{Einstein} condensates involving quantum fluctuations and three-body interactions},
  author = {Yongming Luo and Athanasios Stylianou},
  journal = {Discrete Contin. Dyn. Syst. Ser. B},
  volume = {26},
  number = {6},
  pages = {3455-3477},
  year = {2021}
}

@book{lieb2005mathematics,
  title={The mathematics of the {Bose} gas and its condensation},
  author={Lieb, Elliott H and Solovej, Jan Philip and Seiringer, Robert and Yngvason, Jakob},
  year={2005},
  publisher={Springer}
}
\end{document}